% FV-automata
% version finale janvier 08
% 
%

\documentclass{LMCS}

\usepackage{amsmath}
\usepackage{amssymb}

\usepackage{gastex,enumerate,hyperref}
\overfullrule=2 pt

\renewcommand{\emptyset}{\varnothing}

%%%%%%%%%%%%%%%%%%%%
\def\doi{4 (1:8) 2008}
\lmcsheading%
{\doi}
{1--23}
{}
{}
{Jul.~21, 2006}
{Mar.~25, 2008}
{}   

\begin{document}

\title[Feferman-Vaught theorem, automata and logic]{An Application of the Feferman-Vaught Theorem to Automata and Logics for Words over an Infinite Alphabet}

\author{Alexis B\`es}	%required
\address{Laboratoire d'Algorithmique, Complexit\'e et Logique, EA 4213, Universit\'e Paris-Est}	%required
\email{bes@univ-paris12.fr}  %optional

%% mandatory lists of keywords and classifications:
\keywords{Feferman-Vaught method, composition theorems, decidability, automata, infinite alphabet}
\subjclass{F.1.1, F.4.1, F.4.3}

%%%%%%%%%%%%%%%%%%%%%%%%%%%%%%%%%%%%%%%%%%%%%%%%%%%%%%%%%%%%%%%%%%%%%%%%%%%

%% the abstract has to PRECEED the command \maketitle:
%% be sure not to issue the \maketitle command twice!

\begin{abstract}
  \noindent We show that a special case of the Feferman-Vaught composition theorem gives rise to a natural notion of automata for finite words over an infinite alphabet, with good closure and decidability properties, as well as several logical characterizations.  
We also consider a slight extension of the Feferman-Vaught formalism which allows to express more relations between component values (such as equality), and prove related decidability results. From this result we get new classes of decidable logics for words over an infinite alphabet.

\end{abstract}

\maketitle

\section*{Introduction}

The problem of finding suitable notions of automata for words over an infinite alphabet has been adressed in several papers \cite{ABB80,KamFra94,BDMSS05,BPT03,CG05a,NSV04}. The motivations are e.g. modelization of temporized systems, distributed systems, or manipulation of semi-structured data. A common goal is to find a simple and expressive model which preserves as much as possible the good properties of the classical model. Kaminski and Francez \cite{KamFra94} introduce {\em finite-memory automata}: these are finite automata equipped with a finite number of registers which allow to store symbols during the run, and compare them with the current symbol. The paper \cite{BPT03} extends somehow this idea by allowing transitions which involve an equivalence relation of finite index defined on the set of (vector) values of the registers. The paper \cite{NSV04} continues the study of finite-memory automata, and also introduce pebble automata, which are automata equipped with a finite set of pebbles whose use is restricted by a stack discipline. The automaton can test equality by comparing the pebbled symbols. The work \cite{BDMSS05} addresses decidability issues for some fragment of first-order logic which allows to express properties of words over an infinite alphabet, and introduces a related notion of automaton. More recently, Choffrut and Grigorieff \cite{CG05a} define automata whose transitions are expressed as first-order formulas (see below). Let us also mention the work \cite{DLN05} which studies variants of constraint LTL over infinite domains.

The aim of this paper is to show that a special case of the Feferman-Vaught composition theorem gives rise to a natural notion of automata for finite words over an infinite alphabet, with good closure and decidability properties, as well as several logical characterizations. Building on Mostowski's work \cite{Mos52}, Feferman and Vaught consider in \cite{FV59} several kinds of products of logical structures, and prove that the first-order (shortly: FO) theory of a (generalized) product of structures reduces to the FO theory of the factor structures and the monadic second-order (shortly: MSO) theory of the index structure. We refer the interested reader to the survey papers \cite{Mak04,Tho97} which present several  applications of these results, as well as extensions of the technique; for recent related results see e.g. \cite{Rabinovich04,Rabinovich07,WT04}.

An interesting special case of the Feferman-Vaught (shortly: FV) theorem is when one considers the generalized weak power of a single structure ${\mathfrak M}$, and the index structure is $(\omega;<)$. In this case the domain of the resulting structure roughly consists in the set of finite words over the domain of ${\mathfrak M}$ (seen as an alphabet), and the definable relations can be characterized in terms of automata thanks to B\"uchi-Elgot-Trakhtenbrot results on the equivalence between definability in the MSO theory of $(\omega;<)$ and automata. The automata model and related logics we consider can be seen as direct reformulations of this special case.  Note that the connection between automata and products of structures was already explored in \cite{BlGr00}, where it is shown that automatic structures are closed under finite products.

For the sake of readability, in the paper we first introduce the automata model and prove some of its properties, and then put in evidence the connection with the Feferman-Vaught construction. 

In Section \ref{sectionAutomata} we define the automata model. Given a structure ${\mathfrak M}$ with domain $\Sigma$ (finite or not), we define ${\mathfrak M}$-automata as multitape synchronous finite automata which read finite words over $\Sigma$, and whose transitions are labelled by first-order formulas 
in the language of $\mathfrak M$. We show that the class of relations recognizable by such automata (which are called ${\mathfrak M}$-recognizable relations) are closed under boolean and rational operations, as well as projection, and that the emptiness problem is decidable whenever the FO theory of ${\mathfrak M}$ is. These results are straightforward generalizations of the classical case of a finite alphabet. 

In Section \ref{sectionLogic} we provide two logical characterizations of ${\mathfrak M}$-recognizable languages. The first one uses MSO logic and is an easy adaptation of B\"uchi's classical result \cite{Buchi60}. For the second one, we first introduce the notion of {\em $\mathfrak M$-automatic structures} which extends the notion of automatic structures, and prove some basic related results. Then we extend the Eilenberg-Elgot-Shepherdson FO formalism \cite{EES69} for synchronous relations over words to the case of $\mathfrak M$-recognizable relations. This result, and actually the automaton model itself, are a natural generalization of Choffrut and Grigorieff results mentioned above \cite{CG05a}.

Several results of Section \ref{sectionAutomata}  and \ref{sectionLogic} are rather easy generalizations or reformulations of well-known results; therefore many proofs in these sections are only sketched. 

In Section \ref{sec:FV} we recall useful notions and results about products and powers of structures, then show the close relationship between $\mathfrak M$-recognizability and definability in generalized weak powers. This allows to revisit all previous results in the light of the Feferman-Vaught framework.

Section \ref{sec:applications} presents some applications. We first apply the previous ideas to improve a recent result by Kuske and Lohrey \cite{KL06} related to the monadic chain logic of iteration structures; this application was brought to our attention by Wolfgang Thomas. In the second part of the section, we provide a logical characterization of ${\mathfrak M}$-recognizable relations for the special case where ${\mathfrak M}=(\omega;+)$, in terms of ordinal theories.

In terms of expressive power, ${\mathfrak M}$-automata are incomparable with automata and logics considered in \cite{BDMSS05,KamFra94,NSV04}, since on one hand they allow to express FO constraints, but on the other hand they cannot test whether two positions in a word carry the same symbol (for instance, the language $\{ aa \ | \ a \in \Sigma \}$ is not ${\mathfrak M}$-recognizable whenever $\Sigma$ is infinite, see Example \ref{exa:aa}). As shown e.g. in \cite{BDMSS05,NSV04}  these kinds of tests have to be limited if one wants to keep good decidability properties.  In Section \ref{extFV} we propose a slight extension of the Feferman-Vaught formalism which allows to test whether an $n-$tuple $s_1,\dots,s_n$ of symbols  appearing in distinct positions in a word $w$,  satisfies a formula $\varphi(x_1,\dots,x_n)$ in ${\mathfrak M}$. We isolate a syntactic fragment of this logic, which we denote by $MSO_{R}^+({\mathfrak L})$, for which the satisfiability problem (or in other words, the emptiness problem for related languages) still reduces to the decidability of the FO theory of ${\mathfrak M}$.

\section{Definitions and notations}

In the sequel we deal with finite words over some alphabet, finite or not. Given an alphabet $\Sigma$ (finite or not) we denote by $\Sigma^*$ (respectively $\Sigma^\omega$) the set of finite words (respectively $\omega-$words) over $\Sigma$. The empty word is denoted by $\varepsilon$, and the length of a finite word $w$ by $|w|$. Given a word $w \in \Sigma^*$ with length $n$, we denote by $w[i]$ the $i-$th symbol of $w$ (starting from $i=0$). We shall say that the position $i$ {\em carries} $w[i]$.

We consider several logical formalisms. By FO we mean first-order logic with equality. We shall also consider Monadic Second-Order Logic (shortly: MSO). We denote by $FO({\mathfrak M})$ (respectively $MSO({\mathfrak M})$) the first-order (respectively monadic second-order) theory of the structure ${\mathfrak M}$. We consider only relational structures. Given a language $\mathfrak L$ and a ${\mathfrak L}$-structure ${\mathfrak M}$, for every relational symbol $R$ of $\mathfrak L$ we denote by $R^{\mathfrak M}$ the interpretation of $R$ in ${\mathfrak M}$. However, we will often confuse logical symbols with their interpretation. Moreover we will use freely abbreviations such as $\exists x \in X \ \varphi$.

We shall deal with multitape synchronous automata. As usual, given $n$ finite words $(w_1,\dots,w_n)$ over $\Sigma$, we introduce a padding symbol $\#$, and we complete (if necessary) each $w_i$ with a sufficient number of $\#$'s in order to have words of the same length. Doing this, we obtain $n$ words over $\Sigma \cup \{\#\}$ with the same length, which can be seen as a single word over the alphabet $(\Sigma \cup \{\#\})^n$ (i.e. the alphabet of $n-$tuples of elements of $\Sigma \cup \{\#\}$). This word will be denoted by $\langle w_1,\dots,w_n \rangle$.

Consider a relational language $\mathfrak L$ and  a ${\mathfrak L}$-structure ${\mathfrak M}$ with domain $\Sigma$. Since we have to deal with the symbol $\#$ we shall associate to ${\mathfrak M}$ the structure ${\mathfrak M}_{\#}$ in the extended language ${\mathfrak L}_{\#}={\mathfrak L}\cup\{P_{\#}\}$, such that:

\begin{enumerate}[$\bullet$]
\item the domain of ${\mathfrak M}_{\#}$ is $\Sigma \cup \{\#\}$;
\item for every relational symbol $R$ of $\mathfrak L$, we have $R^{{\mathfrak M}_{\#}}=R^{\mathfrak M}$; 
\item $P_{\#}(x)$ holds in ${\mathfrak M}_{\#}$ if and only if $x=\#$.  
\end{enumerate}

\section{Definition and properties of ${\mathfrak  M}$-automata}\label{sectionAutomata}

In this section we introduce the notion of ${\mathfrak  M}$-automata and ${\mathfrak  M}$-recognizable relations, and prove some basic results.

Let $\Sigma$ denote an alphabet, finite or not, and let ${\mathfrak  M}$ denote an $\mathfrak L-$structure with domain $\Sigma$. An {\em ${\mathfrak M}$-automaton} is a finite $n-$tape synchronous non-deterministic automaton which reads finite words over $\Sigma$. Transition rules are triplets of the form $(q,\varphi,q')$, where $q,q'$ are states of the automaton, and $\varphi(x_1,\dots,x_n)$ is a first-order formula in the language ${\mathfrak L}_{\#}$ of ${\mathfrak M}_{\#}$. The transition $(q,\varphi,q')$ can be executed if the $n-$tuple of current symbols read by the $n$ heads satisfies $\varphi$ in ${\mathfrak M}_\#$.

\begin{defi}
Let $\Sigma$ be an alphabet and let ${\mathfrak  M}$ denote an $\mathfrak L-$structure with domain $\Sigma$. An ${\mathfrak M}$-automaton is defined as a $7-$tuple $\mathcal{A}=(Q,n,\Sigma,{\mathfrak M},E,I,T)$ where
\begin{enumerate}[$\bullet$]
\item $Q$ is a finite set (of states);
\item $n \geq 1$ is the number of tapes;
%\item $\Sigma$ is an alphabet (finite or not);
\item $E \subseteq Q \times {\mathcal F}_n \times Q$ is the set of transitions, where ${\mathcal F}_n$ denotes the set of ${\mathfrak L}_{\#}$-formulas with $n$ free variables;
\item $I \subseteq Q$ is the set of initial states;
\item $T \subseteq Q$ is the set of terminal states.
\end{enumerate}

Given an $n-$tuple $w = (w_1,\dots ,w_n)$ of words over $\Sigma$, a {\em path $\gamma$ in $\mathcal A$ labeled by $\langle w \rangle$} is a sequence of states $\gamma = (q_0, \dots ,q_{m})$, such that $m=|\langle w 
\rangle|$, $q_0 \in I$, and for every $i <m$ there exists a ${\mathfrak L}_{\#}$-formula $\varphi(x_1,\dots,x_n)$ such that $(q_i,\varphi,q_{i+1})\in E$ and 
$${\mathfrak M}_\# \models \varphi(\pi_1(\langle w \rangle)[i],\dots,\pi_n(\langle w\rangle)[i])$$ 
where $\pi_j(\langle w\rangle)$ denotes the $j-$th component of $\langle w\rangle$. The path $\gamma$ is {\em successful} if $q_m \in T$. We say that $w$ is accepted by $\mathcal{A}$ if $\langle w\rangle$ is the label of some successful path. We denote by $L(\mathcal A)$ the set of words $w \in (\Sigma^*)^n$ which are accepted by $\mathcal A$.  
 
\end{defi}

\begin{defi}
Let $n \geq 1$. A relation $X \subseteq (\Sigma^*)^n$ is said to be {\em ${\mathfrak M}$-recognizable} if and only if there exists an $\mathfrak M$-automaton $\mathcal A$ with $n$ tapes such that $X=L(\mathcal A)$.   
\end{defi}

\begin{exa} 
 Let ${\mathfrak M}=(\omega;+)$ where $+$ denotes the graph of addition. The following relations are ${\mathfrak M}$-recognizable:
\begin{enumerate}[(1)]
\item the set of words over $\omega$ (seen as as an infinite alphabet) of the form $(1,0,...,0)$ (we allow the case where there is no $0$). Consider indeed the $\mathfrak M$-automaton with two states $q_0,q_1,$ where $q_0$ is initial and $q_1$ is terminal, and whose set of transitions is $\{(q_0,\varphi_1,q_1),(q_1,\varphi_0,q_1)\}$ where $\varphi_0(x)$ is the formula $x+x=x$, and $\varphi_1(x)$ expresses that $x=1$. The automaton is pictured in Figure
  \ref{fig:basicex}. 
  
\begin{figure}[htbp]
  \begin{center}
    \gasset{Nw=6,Nh=6,loopdiam=7}
    \begin{picture}(35,20)(0,-5)
    \node[Nmarks=i](1)(10,0){$q_0$}
    \node[Nmarks=f](2)(25,0){$q_1$}
    \drawedge(1,2){$\varphi_1$}
    \drawloop(2){$\varphi_0$}
    \end{picture}
    \caption{A simple ${\mathfrak M}$-automaton}
    \label{fig:basicex}
  \end{center}
\end{figure}
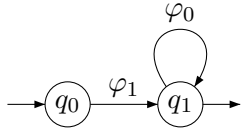

\item the set of words over $\omega$ whose symbols are alternatively even and odd. Consider indeed the $\mathfrak M$-automaton with two states $q_0,q_1,$ where $q_0,q_1$ both are initial and terminal states, and whose set of transitions is $\{(q_0,\varphi_e,q_1),(q_1,\varphi_o,q_0)\}$ where $\varphi_e(x)$ is the formula $\exists z \ z+z=x$, and $\varphi_o(x)= \neg \varphi_e(x)$.

\item the relation $L\subseteq \omega^* \times \omega^* \times \omega^*$ defined by $(u,v,w) \in L$ if and only if $u,v,w$ have the same length, and moreover for every $i$ the $i-$th symbol of $w$ equals the sum of the corresponding symbols of $u$ and $v$. Consider indeed the $\mathfrak M$-automaton with a single state $q$ (which is initial and terminal) and a single transition $(q,\varphi,q)$ where $\varphi(x,y,z)$ is the formula $x+y=z$.
Observe that if $u,v$ and $w$ do not have the same length then the last letter of $\langle u,v,w \rangle $, say $(u_m,v_m,w_m)$, has at least one component which is equal to $\#$, which by the very definition of ${\mathfrak M}_\#$  
 implies ${\mathfrak M}_\# \not\models \varphi(u_m,v_m,w_m)$, which implies in turn that there is no (successful) run of $\mathcal A$ labelled by $\langle u,v,w \rangle$.

\end{enumerate}
\end{exa}

\begin{exa}

Let ${\mathfrak M}=(\Sigma; (P_a)_{a \in \Sigma})$, where $P_a(x)$ holds if and only if $x=a$.  One can show that if $\Sigma$ is finite then ${\mathfrak M}$-recognizable relations coincide with synchronous relations (as defined in \cite{EES69}).

\end{exa}

\begin{exa}\label{exa:aa} For every ${\mathfrak L}$-structure ${\mathfrak M}=(\Sigma;...)$ such that $\Sigma$ is infinite, the language $X=\{aa \ | \ a\in \Sigma \}$ is not ${\mathfrak M}$-recognizable. Indeed assume for a contradiction that there exists some ${\mathfrak M}$-automaton $\mathcal{A}=(Q,n,\Sigma,{\mathfrak M},E,I,T)$ which accepts $X$. Since $X$ is infinite and $E$ is finite, there exists an infinite subset of $X$ whose elements admit a common successful path. More precisely, there exist two transitions $(q_0,\varphi_1,q_1),(q_1,\varphi_2,q_2) \in E$ such that $q_0 \in I$, $q_2 \in T$, and  infinitely many elements $a \in \Sigma$ satisfy 
${\mathfrak M}_\# \models \varphi_1(a)$ and  ${\mathfrak M}_\# \models \varphi_2(a)$. Thus there exist at least two distinct elements $a_1 \ne a_2$ such that ${\mathfrak M}_\# \models \varphi_1(a_1)$ and  ${\mathfrak M}_\# \models \varphi_2(a_2)$, which implies that $a_1a_2$ is accepted by $\mathcal A$, and this leads to a contradiction.
\end{exa}

The closure properties of synchronous relations still hold for $\mathfrak M$-recognizable relations.

\begin{prop}\label{prop:cloture}
The class of ${\mathfrak M}$-recognizable relations is closed under 
\begin{enumerate}[\em(1)]
\item boolean operations;
\item cylindrification;
\item projection.
\end{enumerate}
\end{prop}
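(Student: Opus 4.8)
The plan is to mimic the classical proofs for finite-alphabet synchronous automata, adapting each construction so that the only novelty — first-order formula labels instead of letter labels — is handled uniformly. The key observation throughout is that the set $\mathcal{F}_n$ of $\mathfrak{L}_\#$-formulas with $n$ free variables is closed under the boolean connectives, and that a single formula can encode a disjunction of ``letter choices''; this is exactly what replaces the finite-alphabet bookkeeping.

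For closure under boolean operations, I would treat union, intersection and complement. Union is immediate by taking the disjoint union of two $\mathfrak{M}$-automata for $X$ and $Y$ (same arity $n$), merging initial and terminal sets. For intersection and complement the standard route is determinization followed by a product construction, so the main preliminary step is to show that every $\mathfrak{M}$-automaton can be converted to an equivalent \emph{deterministic complete} one. Here is where the formula labels require care: given a state $q$ with outgoing transitions labelled $\varphi_1,\dots,\varphi_k$, I would form, for each subset $S\subseteq\{1,\dots,k\}$, the formula $\bigwedge_{i\in S}\varphi_i \wedge \bigwedge_{i\notin S}\neg\varphi_i$, which partitions the space of $n$-tuples; the subset-construction states are subsets of $Q$, and the transition out of a macro-state on each such ``atom formula'' goes to the union of the $\varphi_i$-successors over $i\in S$. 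This yields a deterministic automaton reading $(\Sigma\cup\{\#\})^n$, and completeness is obtained by routing the remaining formula (the negation of the disjunction of all outgoing labels, including a clause forcing reading of genuine symbols versus padding) to a sink. Complement is then read off by swapping terminal and non-terminal states, and intersection by the usual product with conjoined transition formulas; one must only verify that the padding discipline is respected so that $\langle w\rangle$ is handled consistently on both tapes, which follows since $P_\#$ is part of $\mathfrak{L}_\#$.

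For cylindrification (adding a dummy tape, so $X\subseteq(\Sigma^*)^n$ becomes $X\times(\Sigma^*)\subseteq(\Sigma^*)^{n+1}$), I would take an $\mathfrak{M}$-automaton for $X$ and replace each transition formula $\varphi(x_1,\dots,x_n)$ by $\varphi(x_1,\dots,x_n)\wedge (\neg P_\#(x_{n+1}) \vee \text{appropriate padding clause})$, adjoining loops that consume arbitrary symbols on the new tape once $X$'s component is exhausted; the synchronization of lengths is again managed entirely through $P_\#$. Projection (existential quantification over one tape) is the dual: from an $(n+1)$-tape automaton, delete the last component by replacing each label $\varphi(x_1,\dots,x_{n+1})$ with $\exists x_{n+1}\,\varphi$, which is again an $\mathfrak{L}_\#$-formula with $n$ free variables, and then clean up the padding so that a word in the projected relation is accepted exactly when some extension was. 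The main subtlety in projection, and the step I expect to be the real obstacle, is the padding bookkeeping: after projecting away a tape, the lengths of the remaining words may differ from the length of the word that was actually read, so one must insert $\varepsilon$-transitions (or equivalently formulas that read $\#$ on all surviving tapes) to allow the automaton to continue past the shorter projected word, and verify that this does not spuriously accept or reject. This is precisely the point at which the finite-alphabet proof uses $\#$-handling, and here it carries over verbatim because $P_\#$ is available in $\mathfrak{L}_\#$; hence I would note that all three parts reduce to the classical arguments once the atom-formula determinization is in place, and the proofs are therefore only sketched.
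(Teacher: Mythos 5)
Your proposal is correct and follows essentially the same route as the paper: union by disjoint union of automata, complementation via the atom-formula construction $\bigwedge_{i\in S}\varphi_i\wedge\bigwedge_{i\notin S}\neg\varphi_i$ followed by subset determinization and swapping of terminal states, and projection by replacing each label $\varphi(x_1,\dots,x_n)$ with $\exists x_n\,\varphi(x_1,\dots,x_n)$. You are in fact slightly more careful than the paper's sketch on projection: the paper does not address the case where the projected-away word is the longest (so the original accepting run extends past the end of the surviving convolution), a genuine padding issue that your fix --- $\varepsilon$-transitions, or equivalently enlarging the terminal set by states from which acceptance is reachable via transitions satisfiable with $\#$ on all surviving tapes --- handles correctly.
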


\begin{proof}{(sketch)} $(1)$ the closure under union is a straightforward adaptation of the classical construction for non-deterministic automata. Let us outline the proof for the closure under complementation. Assume that the relation $R \subseteq (\Sigma^*)^n$ is recognized by the $\mathfrak M$-automaton $\mathcal{A}$. Let $\varphi_1,\dots,\varphi_m$ denote the formulas which appear in the transitions of $\mathcal A$. Consider, for every subset $J \subseteq \{1,\dots,m\}$, the formula $\psi_J: \ \bigwedge_{i \in J} \varphi_i \wedge \bigwedge_{i \not\in J} \neg\varphi_i.$  
The $\mathfrak M$-automaton $\mathcal{A}'$ which is defined from $\mathcal{A}$ by replacing every transition $(q,\varphi_i,q') \in E$ with all transitions of the form  $(q,\psi_J,q')$ where $i \in J$, also recognizes $R$. Moreover $\mathcal{A}'$ can be seen as a classical non-deterministic automaton over the finite ``alphabet" of formulas $\psi_J$, and thus it can be determinized, i.e. transformed into an equivalent $\mathfrak M$-automaton $\mathcal{A}''$ whose transitions involve the formulas $\psi_J$, and such that for every state $q$ and every formula $\psi_J$ there exists a single transition of $\mathcal{A}''$ the form $(q,\psi_J,q')$. Now one can use the usual construction for complementation of deterministic automata, i.e. turn non-terminal states to terminal states and conversely, and get an $\mathfrak M$-automaton which recognizes $(\Sigma^*)^n \setminus R$.

$(2)$ is straightforward.

For $(3)$, in order to recognize the projection of $R \subseteq (\Sigma^*)^n$, say over the $n-1$ first components, it suffices to replace, in the $\mathfrak M$-automaton which recognizes $R$, all transitions ~$(q,\varphi(x_1,\dots,x_n),q')$ with transitions $(q,\exists x_n \varphi(x_1,\dots,x_n),q')$. 
\end{proof}

Regarding the emptiness problem for ${\mathfrak M}$-recognizable languages, the main difference with the classical case is that in an ${\mathfrak M}$-automaton $\mathcal A$ there can exist transitions $(q,\varphi,q')\in E$ such that no $n-$tuple of elements of ${\mathfrak M}_{\#}$ satisfies $\varphi$; such transitions will never be executed by $\mathcal A$.  Thus one has to remove such transitions from $E$ in order to apply the usual reachability algorithm for the emptiness problem; this can be done effectively if and only if $FO({\mathfrak M}_{\#})$ is decidable. Since $FO({\mathfrak M}_{\#})$ and $FO({\mathfrak M})$ are reducible to each other, we get the following result.

\begin{prop}\label{prop:reducdecid}
The decidability of the emptiness problem for ${\mathfrak M}$-recognizable languages is equivalent to the decidability of $FO({\mathfrak M})$. 
\end{prop}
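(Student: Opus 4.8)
The plan is to prove the equivalence by establishing both directions of reduction, using the discussion that precedes the statement in the excerpt.

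For the direction stating that decidability of $FO(\mathfrak{M})$ implies decidability of the emptiness problem, I would proceed as follows. First I would recall that $FO(\mathfrak{M})$ and $FO(\mathfrak{M}_\#)$ are mutually reducible, so that an oracle for one yields an oracle for the other; concretely, a sentence about $\mathfrak{M}_\#$ can be translated into a sentence about $\mathfrak{M}$ by relativizing quantifiers and replacing $P_\#(x)$ by a fresh treatment of the padding element, since $\mathfrak{M}_\#$ differs from $\mathfrak{M}$ only by adjoining the single element $\#$ and the predicate $P_\#$. Given an $\mathfrak{M}$-automaton $\mathcal{A} = (Q,n,\Sigma,\mathfrak{M},E,I,T)$, the key observation is that a transition $(q,\varphi,q') \in E$ is \emph{executable} exactly when the sentence $\exists x_1 \cdots \exists x_n\, \varphi(x_1,\dots,x_n)$ holds in $\mathfrak{M}_\#$. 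Since $FO(\mathfrak{M}_\#)$ is decidable, I can effectively test each of the finitely many transitions and delete those that are never executable, obtaining an automaton $\mathcal{A}'$ with the same language $L(\mathcal{A})$. On $\mathcal{A}'$ every remaining transition can be taken, so the existence of a successful path no longer depends on $\mathfrak{M}$ at all: one simply runs the classical reachability algorithm to decide whether some terminal state is reachable from some initial state along edges of $E$. This yields $L(\mathcal{A}) = \emptyset$ iff no terminal state is reachable, which is decidable.

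For the converse direction, I would show that decidability of the emptiness problem implies decidability of $FO(\mathfrak{M})$, equivalently of $FO(\mathfrak{M}_\#)$. I would reduce the truth of an arbitrary $\mathfrak{L}_\#$-sentence to an emptiness question. Given a sentence, first note that by prenexing and by the closure of $\mathfrak{M}$-recognizable relations under projection and boolean operations (Proposition \ref{prop:cloture}), it suffices to decide sentences of the form $\exists x_1 \cdots \exists x_n\, \varphi(x_1,\dots,x_n)$ where $\varphi$ is quantifier-free. Such a sentence holds in $\mathfrak{M}_\#$ exactly when the one-tape, single-transition $\mathfrak{M}$-automaton reading length-one words whose only transition is labelled by $\varphi$ accepts some word, i.e. when its language is nonempty. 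More directly, build a one-state $\mathfrak{M}$-automaton with $n$ tapes and the single loop transition labelled by $\varphi$; by the same executability analysis, its language is nonempty iff $\exists \bar{x}\,\varphi(\bar{x})$ is true in $\mathfrak{M}_\#$. Deciding emptiness of this automaton therefore decides the sentence, and handling the quantifier prefix through projection completes the reduction.

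The main obstacle I expect is not either reduction in isolation but making the bidirectional reduction between $FO(\mathfrak{M})$ and $FO(\mathfrak{M}_\#)$ fully precise, since the theorem is stated in terms of $\mathfrak{M}$ while the automata and executability tests naturally live over $\mathfrak{M}_\#$. The relativization handling the padding symbol $\#$ must be done carefully so that quantification over $\Sigma \cup \{\#\}$ is correctly simulated by quantification over $\Sigma$ together with a case for $\#$, and conversely; this is routine but is the one place where a naive argument could slip. Once that translation is in hand, both directions follow from the executability characterization of transitions together with classical reachability and the closure properties already established.
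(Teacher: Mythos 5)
Your proposal follows the same route as the paper: the forward direction (delete transitions whose labels are unsatisfiable in ${\mathfrak M}_\#$, then run classical reachability) is exactly the paper's argument, and the converse (turn a sentence into an emptiness question for a small automaton) is what the paper leaves implicit behind its ``if and only if''. Your emphasis on the mutual reducibility of $FO({\mathfrak M})$ and $FO({\mathfrak M}_\#)$ also matches the paper's one-line remark. However, two steps in your converse direction need repair.

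First, your ``more direct'' automaton --- a single state that is both initial and terminal, with a loop labelled $\varphi$ --- never has empty language: by the paper's definition of acceptance, the empty tuple is accepted via the length-zero path $(q_0)$, since $q_0\in I\cap T$. So its emptiness cannot decide $\exists\bar{x}\,\varphi(\bar{x})$. You must use two states, an initial $q_0$ and a distinct terminal $q_1$, joined by the single transition $(q_0,\varphi,q_1)$ (and the automaton needs $n$ tapes, not one, since $\varphi$ has $n$ free variables); then accepted tuples have length exactly one and nonemptiness is equivalent to ${\mathfrak M}_\#\models\exists\bar{x}\,\varphi(\bar{x})$. Second, the reduction ``by prenexing and Proposition~\ref{prop:cloture} it suffices to decide sentences $\exists\bar{x}\,\varphi$ with $\varphi$ quantifier-free'' is both unjustified as stated and unnecessary. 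Deciding the existential fragment of a theory does not in general suffice to decide the whole theory, and the closure operations of Proposition~\ref{prop:cloture} do not produce such a reduction: projection inserts quantifiers into the transition labels, so iterating it handles quantifier alternation only by producing automata with quantified labels, i.e.\ it reduces truth to emptiness directly, not to existential sentences. The clean fix is to drop the prenexing step entirely: transition labels of ${\mathfrak M}$-automata are arbitrary first-order ${\mathfrak L}_\#$-formulas, so given any ${\mathfrak L}$-sentence $\psi$ you may relativize its quantifiers to $\neg P_\#$, obtaining $\tilde{\psi}$ with ${\mathfrak M}\models\psi$ iff ${\mathfrak M}_\#\models\tilde{\psi}$, and use $\tilde{\psi}\wedge x_1=x_1$ as the label of the single transition of the two-state automaton above; one emptiness query then decides $\psi$.
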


\section{Logic and ${\mathfrak M}$-automata}\label{sectionLogic}

There exist three important logical formalisms which capture automata: 

\begin{enumerate}[$\bullet$]
\item B\"uchi-Elgot-Trakhtenbrot MSO logic (see \cite{Buchi60}), i.e. the weak monadic second order theory of $(\omega,<)$;

\item the Eilenberg-Elgot-Shepherdson (shortly: EES) formalism \cite{EES69}, i.e. the FO theory of  
${\mathcal S}=(\Sigma^*;EqLength,\preceq, \{L_a\}_{a\in \Sigma})$
where
\begin{enumerate}[-]
\item $EqLength(x,y)$ holds if and only if $x$ and $y$ have the same length
\item $x\preceq y$ holds if and only if $x$ is a prefix of $y$
\item $L_a(x)$ holds if and only if $a$ is the last letter of $x$. 
\end{enumerate}

\item the so-called B\"uchi Arithmetic of base $k$, i.e. the FO theory of the structure $(\omega;+,V_k)$ where $V_k(x)$ denotes the greatest power of $k$ which divides $x$, see \cite{BHMV94}.
\end{enumerate}

In this section we extend the two first formalisms to the case of words over any alphabet (finite or not). In order to extend the EES formalism, we introduce the notion of $\mathfrak M$-automatic structure, which generalizes the one of automatic structures.

 In Section \ref{subsection:ordinal} we will prove that for ${\mathfrak M}=(\omega;+)$, the class of ${\mathfrak M}$-recognizable relations corresponds to the class of relations definable in the structure $(\omega^\omega; +)$. The latter structure can therefore be seen as ``B\"uchi Arithmetic of base $\omega$".

\subsection{Monadic Second-Order Logic}

\

B\"uchi, Elgot and Trakhtenbrot prove that languages of words definable by MSO logic coincide with regular languages (see \cite{Tho97handbook}). As an example if $\Sigma=\{a,b\}$ then one can characterize the set of words $w \in aa^*b^*$ in MSO logic with the formula
$$\exists x (Q_a(x) \wedge \forall y ((x<y \rightarrow Q_b(y)) \wedge (y<x  \rightarrow Q_a(y))))$$
where the (first-order) variables $x,y$ are interpreted as positions in the word, $<$ denotes the natural ordering of positions, and $Q_s(y)$ holds if and only if the $y-$th position in the word carries the symbol $s$. The unary predicates $Q_a$ and $Q_b$ express properties related to elements of $\Sigma$. These properties are actually first-order definable in the structure ${\mathfrak M}=(\Sigma; P_a, P_b)$. We shall extend this formalism by considering any structure ${\mathfrak M}$ with domain $\Sigma$ (finite or not) and adding to the MSO formalism unary predicates $\alpha_F(x)$ which express that the symbol at position $x$ satisfies the formula $F$ in ${\mathfrak M}$.  

More formally, let ${\mathfrak M}=(\Sigma;...)$ be an ${\mathfrak L}$-structure. We associate to every ${\mathfrak L}_{\#}$-formula $F$ some unary relational symbol $\alpha_F$. We define then $MSO({\mathfrak L})$ as MSO over the language $\{<,(\alpha_F)_{F \in {\mathcal F}}\}$ where $\mathcal F$ denotes the set of (first-order) ${\mathfrak L}_{\#}$-formulas with at least one free variable.

\begin{defi}
We say that $A \subseteq (\Sigma^*)^n$ is {\em $MSO({\mathfrak M})$-definable} if there exists an $MSO({\mathfrak L})$-sentence $\psi$ such that $w=(w_1,\dots,w_n) \in A$ if and only if $$(D,<_D,(\alpha_F)_{F \in {\mathcal F}})\models \psi$$ where 
\begin{enumerate}[$\bullet$]
\item $D=\{0,1,\dots,|\langle w \rangle|-1\}$, and $<_D$ is the natural ordering relation restricted to $D$;
\item For every ${\mathfrak L}_{\#}$-formula $F$ with $n$ free variables, and every position $x$ in $w$, the formula $\alpha_F(x)$ holds in $(D,<_D,(\alpha_F)_{F \in {\mathcal F}})$  if and only if 
$${\mathfrak M}_\# \models F(\pi_1(\langle w \rangle)[x],\dots,\pi_n(\langle w \rangle)[x]),$$
where $\pi_i(\langle w \rangle)$ denotes the $i-$th component of $\langle w\rangle$.
\end{enumerate}
\end{defi}

\begin{exa}\label{exa:msodef}
Let ${\mathfrak M}=(\omega;+)$.
\begin{enumerate}[$\bullet$]
\item the set $A \subseteq \omega^*$ of words that contain only even symbols is $MSO({\mathfrak M})$-definable by the sentence $\forall y \  \alpha_F(y)$,  
where $F(x): \ \exists z (z+z = x)$
\item the set $B \subseteq \omega^*$ of words whose symbols are alternatively even and odd is $MSO({\mathfrak M})$-definable by the formula 
$$\forall x \forall y ((x<y \wedge \neg \exists z \ x<z<y) \rightarrow ((\alpha_F(x) \leftrightarrow  \alpha_{\neg F}(y)) ))$$
where $F$ is the formula defined above.
\end{enumerate}
\end{exa}

\begin{exa}
Let ${\mathfrak M}=(\Sigma; (P_a)_{a \in \Sigma})$. If $\Sigma$ is finite, then $MSO({\mathfrak M})$-definable languages coincide with languages definable in B\"uchi's MSO logic. Indeed in this case it is rather easy to prove that any formula $F(x_1,\dots,x_n)$ in the language of ${\mathfrak M}$ is equivalent to a boolean combination of formulas of the form $P_a(x_i)$, which implies in turn that every predicate $\alpha_F(y)$ is equivalent to a boolean combination of predicates $Q_a(y)$.

\end{exa}

We now generalize the equivalence between recognizability and definability to the case of any alphabet $\Sigma$ (finite or not). 

\begin{prop}\label{logicauto}

Let ${\mathfrak M}$ be a   structure with domain $\Sigma$. For every $n\geq 1$ and every relation $R \subseteq (\Sigma^*)^n$, the relation $R$ is $MSO({\mathfrak M})$-definable if and only if it is ${\mathfrak M}$-recognizable.

\end{prop}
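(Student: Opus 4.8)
The plan is to adapt B\"uchi's classical equivalence between MSO-definability and recognizability to the present setting, treating the unary predicates $\alpha_F$ exactly as one treats the classical position-labelling predicates $Q_a$. The essential observation is that an $\mathfrak M$-automaton reads, at each position, an $n$-tuple of symbols, and the only way it can test that tuple is by evaluating ${\mathfrak L}_{\#}$-formulas against it; symmetrically, an $MSO({\mathfrak L})$-sentence can only inspect each position through the predicates $\alpha_F$, which report the truth value in ${\mathfrak M}_\#$ of ${\mathfrak L}_{\#}$-formulas at that position. So the two formalisms see the word through the same abstraction, namely the map sending each position to the set of ${\mathfrak L}_{\#}$-formulas its tuple satisfies. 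This is what makes the classical proof go through with essentially no new ideas, and is why the proposition can be proved by reduction to the finite-alphabet case rather than from scratch.

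For the direction from recognizability to definability I would proceed as in B\"uchi's argument. Given an $\mathfrak M$-automaton $\mathcal A=(Q,n,\Sigma,{\mathfrak M},E,I,T)$ with formulas $\varphi_1,\dots,\varphi_m$ occurring on its transitions, I would write an $MSO({\mathfrak L})$-sentence asserting the existence of an accepting run. Concretely, I would existentially quantify second-order variables $X_q$ for $q\in Q$, forming a partition of the positions that encodes the state reached after reading each position, then assert: that the partition is a partition, that the initial position starts from an initial state, that the last position lies in a terminal state, and that every consecutive pair of positions is linked by a legal transition. The key point specific to this setting is that the guard ``the tuple at position $x$ satisfies $\varphi_i$'' is expressed directly by the atomic predicate $\alpha_{\varphi_i}(x)$. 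All the remaining machinery (successor relation definable from $<$, first/last positions, boundary cases for the empty word) is standard. One minor care needed is the handling of the padding $\#$ and of $EqLength$-type length agreement: since $\langle w\rangle$ is a single word over $(\Sigma\cup\{\#\})^n$ whose length is $|\langle w\rangle|$, the predicates $\alpha_F$ already encode whether a component equals $\#$ via $P_\#$, so the automaton's implicit length checks are faithfully mirrored by the formulas.

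For the converse, from $MSO({\mathfrak M})$-definability to recognizability, I would use the closure properties established in Proposition~\ref{prop:cloture} together with an induction on the structure of the $MSO({\mathfrak L})$-formula, in the style that reduces B\"uchi's theorem to closure of regular languages under boolean operations and projection. The atomic formulas $\alpha_F(x)$, the order $<$, and the second-order membership are each shown recognizable by a direct automaton construction over a suitably cylindrified alphabet in which second-order variables are encoded by an extra tape of $0/1$ markers; conjunction and negation are handled by Proposition~\ref{prop:cloture}(1), and both first-order and second-order existential quantification are handled by projection, Proposition~\ref{prop:cloture}(3). Here the atomic predicate $\alpha_F(x)$ becomes a transition labelled by the formula $F$ itself, which is exactly the kind of guard an $\mathfrak M$-automaton allows, so no expressive gap arises.

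I expect the only genuinely delicate point, and hence the main obstacle, to be the bookkeeping for free first-order and second-order variables: the value of a position variable and the characteristic function of a set variable have to be encoded on additional tapes (or as marked components), and one must verify that the $\mathfrak M$-automaton can test, using only ${\mathfrak L}_{\#}$-formulas and the finite marking alphabet, predicates such as ``exactly one marked position'' and ``this marked position precedes that one''. These are finite-alphabet conditions independent of ${\mathfrak M}$, so they pose no new difficulty beyond the classical case; the subtlety is merely to keep the encoding consistent across the inductive steps and to treat the empty word and length/padding conventions uniformly. Since this is a faithful generalization of a well-known argument, I would present the construction in outline and relegate the routine verifications, exactly as the paper does for the earlier propositions.
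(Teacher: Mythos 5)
Your proof is correct in outline, and your first direction (recognizability to definability, via B\"uchi's run-encoding with set variables $X_q$ and guards expressed by $\alpha_{\varphi_i}(x)$) is exactly the paper's. For the converse you take a genuinely different route: a direct induction on the $MSO({\mathfrak L})$-formula over the infinite alphabet, handling connectives and quantifiers by the closure properties of Proposition~\ref{prop:cloture} and encoding free variables by extra $0/1$ marker tracks. The paper instead first replaces the predicates $\alpha_{F_1},\dots,\alpha_{F_m}$ occurring in $\psi$ by the atoms of the Boolean algebra they generate, so that every $n$-tuple of elements of ${\mathfrak M}_{\#}$ satisfies exactly one of the new formulas; each position of $\langle w\rangle$ then has a unique ``type'', the word can be read as a word over the finite alphabet $\{1,\dots,m\}$, and the whole converse reduces to the classical finite-alphabet argument, after which the resulting finite automaton translates back into an ${\mathfrak M}$-automaton whose guards are the type formulas. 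The practical difference is this: your route, taken literally, needs automata reading words whose letters combine an element of $(\Sigma\cup\{\#\})^n$ with a finite marker vector, and that is not the ${\mathfrak M}$-automaton model of the paper, so Proposition~\ref{prop:cloture} cannot be cited verbatim for the inductive steps involving free variables --- you would have to extend the model to such hybrid alphabets and re-verify closure (routine, since the finite marker component just multiplies the finite ``alphabet of formulas'', but it is an extra layer you currently gloss over). The paper's type-partition trick is precisely what makes this bookkeeping unnecessary: all marker machinery lives in the classical finite-alphabet world, and only sentences ever need to be interpreted by ${\mathfrak M}$-automata. Interestingly, your opening paragraph already contains the paper's key observation (positions are seen only through the set of ${\mathfrak L}_{\#}$-formulas their tuple satisfies, hence a reduction to a finite alphabet is possible); following that observation through, rather than the direct induction, would give you the paper's proof and spare you the hybrid-automaton caveat.
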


\begin{proof}
The proof is a simple adaptation from B\"uchi's equivalence between WMSO definability and recognizability \cite{Buchi60}. For the direction from recognizability
to definability one uses the B\"uchi's technique of encoding of an accepting run of an automaton by a formula. For the converse one uses the fact that the formulas $\alpha_{F_1},\dots,\alpha_{F_m}$ appearing in an $MSO({\mathfrak L})$-sentence $\psi$  can be chosen such that 
every $n-$tuple of elements of the domain of ${\mathfrak M}_{\#}$ satisfies exactly one formula among the $F_i$'s in ${\mathfrak M}_{\#}$, which allows then to see words over $\Sigma \cup \{\#\}$ as words over the finite alphabet $\{1,\dots,m\}$ and then prove the result by induction on the construction of $\psi$. 
\end{proof}

\

\subsection{${\mathfrak  M}$-automatic structures and an extension of the EES formalism}\label{automatic}

Automatic structures (see \cite{Hod83,KhoNer94,BlGr00}) are relational structures which can be presented by finite automata over a finite alphabet. 

\begin{defi}
The structure ${\mathfrak  N}=(N;R_1,\dots,R_k)$ is said to be {\em automatic} if there exist a finite alphabet $\Sigma$ and an injective mapping $\mu:N \to \Sigma^*$ such that the images by $\mu$
of $N,R_1,\dots,R_k$ are synchronous relations.
\end{defi}

The fundamental result about automatic structures is the following.
\begin{thm}[\cite{Hod83}]\label{thm:hodgson}
If ${\mathfrak  N}$ is automatic then:
\begin{enumerate}[\em(1)]
\item the image by $\mu$ of every relation definable in ${\mathfrak  N}$ is a synchronous relation;
\item $FO({\mathfrak  N})$ is decidable.
\end{enumerate}
\end{thm}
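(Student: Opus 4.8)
The plan is to prove part (1) by structural induction on first-order formulas, and then obtain part (2) as an immediate consequence. Throughout I identify a relation $R \subseteq N^m$ with its image $\mu(R) \subseteq (\Sigma^*)^m$, and I use the classical closure properties of synchronous relations over a finite alphabet (the finite-alphabet instance of Proposition~\ref{prop:cloture}), namely closure under boolean operations, cylindrification and projection, together with the decidability of the emptiness problem for synchronous automata.

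For part (1), I proceed by induction on the construction of a formula $\varphi(x_1,\dots,x_m)$ in the language of ${\mathfrak N}$. In the base case an atomic formula is either of the form $R_i(x_{j_1},\dots,x_{j_r})$ or an equality $x_i=x_j$: the image of the former is synchronous by the very definition of automaticity (after a suitable cylindrification and permutation of coordinates), while the image of the latter is the diagonal $\{\langle w,w\rangle : w\in\mu(N)\}$, which is synchronous since $\mu$ is injective and $\mu(N)$ is regular. For the inductive step, negation is handled by complementation relative to $\mu(N)^m$ (which is synchronous because $\mu(N)$ is a regular language), conjunction and disjunction by intersection and union, and the existential quantifier $\exists x_m\,\varphi$ by projecting the synchronous relation defined by $\varphi$ onto its first $m-1$ coordinates.

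The delicate point, and the one I expect to be the main obstacle, is the closure under projection, because of the padding convention used in the convolution $\langle\cdots\rangle$. When the quantified word $w_m$ is longer than all of $w_1,\dots,w_{m-1}$, the convolution $\langle w_1,\dots,w_m\rangle$ is strictly longer than $\langle w_1,\dots,w_{m-1}\rangle$, so one cannot simply erase the last track of the recognizing automaton. The fix is to build the projected automaton so that it nondeterministically guesses the letters on the erased track and, once the visible tracks $w_1,\dots,w_{m-1}$ are exhausted (i.e.\ carry only the padding symbol), is allowed to keep simulating arbitrary letters on the hidden track and to accept as soon as an accepting state is reached. Marking the states reachable from the current state by reading padding on all visible tracks and arbitrary symbols on the hidden one yields the desired synchronous automaton for the projection.

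Finally, part (2) follows from part (1): given a sentence $\varphi$, write it as $\exists\bar x\,\psi(\bar x)$ (or observe directly that a sentence defines a $0$-ary relation); by part (1) the relation defined by $\psi$ is synchronous, hence recognized by a finite synchronous automaton that can be computed from the presentation. Then ${\mathfrak N}\models\varphi$ holds if and only if this relation is nonempty, and nonemptiness is decidable by the standard reachability algorithm for finite automata, which yields an effective decision procedure for $FO({\mathfrak N})$.
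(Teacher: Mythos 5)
Your proof is correct and takes essentially the same route as the paper: the result is imported from \cite{Hod83}, and the paper's own sketch (given for its generalization, Theorem \ref{thm:hodgson2}) is exactly your argument---induction on formulas using the closure of synchronous relations under boolean operations, cylindrification and projection, followed by reducing truth of sentences to automaton emptiness via reachability. Your careful handling of the padding issue in the projection step is a correct elaboration of a standard technicality that the paper absorbs into (the finite-alphabet instance of) Proposition \ref{prop:cloture}.
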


We can generalize the previous notions and results to the case of an infinite alphabet.

\begin{defi}
Let ${\mathfrak M}=(\Sigma;\dots)$ and ${\mathfrak  N}=(N;R_1,\dots,R_k)$ be two   structures. We say that ${\mathfrak  N}$ is {\em ${\mathfrak  M}$-automatic} if there exists an injective mapping $\mu:N \to \Sigma^*$ such that the images by $\mu$
of $N,R_1,\dots,R_k$ are ${\mathfrak M}$-recognizable relations.
\end{defi}

\begin{exa}\label{exa:skolemauto}
Let ${\mathfrak M}=(\omega;+)$, and let ${\mathfrak  N}=(\omega \setminus\{0\}; \times)$ where $\times$ denotes the graph of multiplication. The structure ${\mathfrak  N}$ (which is often called {\em Skolem arithmetic}) is ${\mathfrak  M}$-automatic. Consider indeed the function $\mu: (\omega \setminus\{0\}) \to \omega^*$ which maps every natural number $n>1$ whose prime decomposition is $n= p_0^{n_0}p_1^{n_1}\dots p_k^{n_k}$, where $p_i$ denotes the $i-$th prime and $n_k \ne 0$, to the word $\mu(n)=n_0 n_1\dots n_k$. Moreover let $\mu(1)=\varepsilon$. It is not difficult to check that the image by $\mu$ of $\omega \setminus\{0\}$ and $\times$ are $\mathfrak M$-recognizable.
\end{exa}

One can prove rather easily that if $\Sigma$ is finite and ${\mathfrak M}=(\Sigma;(P_a)_{a \in \Sigma})$, then ${\mathfrak  M}$-automatic structures correspond to automatic structures. On the other hand, there exist structures which are $\mathfrak M$-automatic but not automatic. For instance in \cite{BlGr00} it is proven that the structure ${\mathfrak  N}$ considered in the previous example, i.e. Skolem arithmetic, is not automatic\footnote{Note that Skolem arithmetic is tree-automatic in the sense of \cite{BlGr00}. This comes from the fact that the structure $\mathfrak M$, i.e. Presburger arithmetic, is automatic. More generally one can prove that if ${\mathfrak  M}$ is an automatic structure and  ${\mathfrak  N}$ is ${\mathfrak  M}$-automatic then ${\mathfrak  N}$ is tree-automatic.}.

We can extend Theorem \ref{thm:hodgson} in the following way.

\begin{thm}\label{thm:hodgson2}
If ${\mathfrak  N}$ is ${\mathfrak  M}$-automatic then:
\begin{enumerate}[\em(1)]
\item the image by $\mu$ of every relation definable in ${\mathfrak  N}$ is ${\mathfrak M}$-recognizable;
\item $FO({\mathfrak  N})$ reduces to $FO({\mathfrak  M})$.
\end{enumerate}
\end{thm}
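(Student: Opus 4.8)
The plan is to mimic the classical proof of Theorem~\ref{thm:hodgson} (the automatic-structure case), replacing each appeal to closure and decidability properties of synchronous relations by the corresponding properties of ${\mathfrak M}$-recognizable relations established in Section~\ref{sectionAutomata}. Since ${\mathfrak N}$ is ${\mathfrak M}$-automatic, fix the injective coding $\mu:N \to \Sigma^*$ such that the images $\mu(N)$ and $\mu(R_1),\dots,\mu(R_k)$ are ${\mathfrak M}$-recognizable. The key observation is that Proposition~\ref{prop:cloture} gives closure of ${\mathfrak M}$-recognizable relations under boolean operations, cylindrification and projection, which are exactly the operations needed to interpret the first-order connectives $\neg,\wedge,\vee$ and the quantifiers.

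For part~(1), I would argue by induction on the structure of an arbitrary first-order formula $\theta(y_1,\dots,y_p)$ in the language of ${\mathfrak N}$. The claim to be proved by induction is that the relation $\mu(\theta^{\mathfrak N}) \subseteq (\Sigma^*)^p$, i.e. the set of tuples $(\mu(a_1),\dots,\mu(a_p))$ with $a_i \in N$ and ${\mathfrak N}\models \theta(a_1,\dots,a_p)$, is ${\mathfrak M}$-recognizable. The atomic case is handled by hypothesis together with closure under cylindrification: an atomic formula $R_i(y_{j_1},\dots,y_{j_r})$ among the full list of variables $y_1,\dots,y_p$ yields the image of $R_i$ cylindrified over the remaining coordinates and intersected with the constraint that each remaining coordinate lie in $\mu(N)$, which is ${\mathfrak M}$-recognizable by part~(1) applied to the atom $N(y)$ itself; equality atoms $y_j = y_\ell$ give the diagonal of $\mu(N)$, which is ${\mathfrak M}$-recognizable because the coding $\mu$ is injective, so after padding two copies of the same word agree symbol by symbol and one can test this with a one-state automaton whose transition formula asserts that the two current components are equal (using $P_\#$ to handle the padding). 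The boolean cases use closure under boolean operations, and $\exists y_j\,\theta$ uses closure under projection (eliminating the $j$-th coordinate); one must take care that the resulting relation is still confined to coded tuples, which is maintained because each inductive relation is a subset of $\mu(N)^{(\cdot)}$ by construction.

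For part~(2), I would observe that a sentence $\theta$ of ${\mathfrak N}$ (the case $p=0$) gives, by part~(1), a ${\mathfrak M}$-recognizable subset of $(\Sigma^*)^0$, i.e. a truth value, and that ${\mathfrak N}\models\theta$ holds if and only if the corresponding ${\mathfrak M}$-automaton accepts a nonempty language. By Proposition~\ref{prop:reducdecid}, testing emptiness of an ${\mathfrak M}$-recognizable language reduces to $FO({\mathfrak M})$. Since the construction in part~(1) is effective — each inductive step produces an explicit ${\mathfrak M}$-automaton from the automata of the subformulas — the map from $\theta$ to its automaton is computable, and deciding whether ${\mathfrak N}\models\theta$ thus reduces to $FO({\mathfrak M})$, as required.

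The main obstacle, as in the classical automatic-structures argument, is the quantifier case and specifically the interplay between projection and padding. The subtle point is that ${\mathfrak M}$-automata operate on synchronously padded tuples $\langle w_1,\dots,w_p\rangle$, so when one projects away a coordinate the surviving tuple may have a spurious tail of padding symbols in the deleted position that must be correctly discarded, and one must verify that the existential witness $y_j$ ranges over $\mu(N)$ rather than over arbitrary words in $\Sigma^*$; this is why the relativizing conjunction with the (recognizable) set $\mu(N)$ is inserted at the atomic and quantifier steps. Handling this bookkeeping carefully — exactly the place where the padding symbol $\#$ and its predicate $P_\#$ earn their keep — is the only genuinely delicate part; everything else is a routine transcription of the finite-alphabet proof using Proposition~\ref{prop:cloture} and Proposition~\ref{prop:reducdecid} in place of the classical facts about synchronous relations.
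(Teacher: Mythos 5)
Your proposal is correct and follows essentially the same route as the paper, which proves the theorem exactly this way: part (1) by induction on formulas using the closure properties of Proposition~\ref{prop:cloture}, and part (2) by reducing truth of sentences to the emptiness problem and invoking Proposition~\ref{prop:reducdecid}. The paper gives only this sketch, so your extra care with equality atoms, relativization of quantifiers to $\mu(N)$, and the padding bookkeeping simply fills in details the paper leaves implicit.
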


\begin{proof}(sketch)
This is a straightforward adaptation of the proof of Theorem \ref{thm:hodgson}. 
For $(1)$ one proceeds by induction on the formulas and use the closure properties of $\mathfrak M$-recognizable relations as stated in Proposition \ref{prop:cloture}. 
For $(2)$, one first deduces from $(1)$ that the decidability of $FO({\mathfrak  N})$ reduces to the decidability of the emptiness problem for $\mathfrak M$-automata, and then uses Proposition \ref{prop:reducdecid}.
\end{proof}

Let us now turn to the first-order characterization of $\mathfrak M$-recognizability. Eilenberg, Elgot and Shepherdson prove the following result.

\begin{thm}[\cite{EES69}] Let $\Sigma$ denote a finite alphabet with at least two elements. For every $n \geq 1$, an $n-$ary relation over $\Sigma$ is synchronous if and only if it is definable in the structure 
$${\mathcal S}=(\Sigma^*;EqLength,\preceq, \{L_a\}_{a\in \Sigma})$$
where
\begin{enumerate}[$\bullet$]
\item $EqLength(x,y)$ holds if and only if $x$ and $y$ have the same length;
\item $x\preceq y$ holds if and only if $x$ is a prefix of $y$;
\item $L_a(x)$ holds if and only if $a$ is the last letter of $x$. 
\end{enumerate}
\end{thm}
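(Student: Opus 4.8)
The plan is to prove both directions of the equivalence between synchronicity and definability in $\mathcal{S}=(\Sigma^*;EqLength,\preceq,\{L_a\}_{a\in\Sigma})$. The easier direction is from definability to synchronicity. Here I would proceed by induction on the structure of the formula, showing that each atomic relation $EqLength$, $\preceq$, and $L_a$ is synchronous, and that the class of synchronous relations is closed under the boolean connectives and existential quantification (i.e. projection). The closure properties are the analogue of Proposition \ref{prop:cloture} for the special case of a finite alphabet, so this part is essentially bookkeeping: one builds explicit small automata for the base cases and invokes closure. The one subtlety worth flagging is that projection over padded encodings can create trailing padding symbols that must be handled consistently, so the automata must be set up to accept the canonical padded encoding $\langle w_1,\dots,w_n\rangle$.

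The harder and more interesting direction is from synchronicity to definability: given a synchronous $n$-ary relation $R$, recognized by a synchronous automaton $\mathcal{A}$ over the alphabet $(\Sigma\cup\{\#\})^n$, I must write an $\mathcal{S}$-formula defining $R$. The standard approach, following the original Eilenberg-Elgot-Shepherdson argument, is to encode an accepting run of $\mathcal{A}$. The main difficulty is that $\mathcal{S}$ is first-order, not MSO, so I cannot directly quantify over the run as a set-valued object. Instead I would exploit the fact that the state sequence of a run can itself be encoded as a word, and that positions in the common padded length can be named by prefixes. Concretely, for a fixed number of states $|Q|=k$, I would introduce auxiliary variables coding, for each state $q$, the word of positions at which the run is in state $q$; the relations "$x$ is a prefix of $y$" and "$x,y$ have equal length" together with the last-letter predicates $L_a$ let me express that a given position carries a given letter tuple and that the transition relation is respected between consecutive positions.

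The key technical device is that $\preceq$ and $EqLength$ allow one to define, inside the first-order language, a successor relation on positions of a fixed word and to quantify over prefixes as surrogates for positions. I would first establish several definable auxiliary predicates: that $x$ is the immediate prefix (predecessor) of $y$, that two words agree up to a common length, and that the $i$-th letter of one word matches a prescribed symbol. With these in hand, I can state that there exists a labeling of positions by states such that the first position is labeled by an initial state, the last by a terminal state, and every consecutive pair respects a transition of $\mathcal{A}$; since the number of states is finite, the existential quantification over labelings becomes a finite disjunction over bounded first-order data rather than a genuine second-order quantifier.

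I expect the main obstacle to be precisely this encoding of the run within first-order logic over $\mathcal{S}$: verifying that the transition-consistency condition along the whole word is first-order expressible, rather than requiring set quantification, is the crux of the Eilenberg-Elgot-Shepherdson theorem. Once the auxiliary predicates (successor on positions, letter-extraction, equal-length truncation) are shown definable, the remainder is a careful but routine translation of the automaton's acceptance condition into an $\mathcal{S}$-sentence. Since this theorem is the classical EES result over a finite alphabet, stated here for reference, I would expect the author to simply cite \cite{EES69} rather than reproduce the full argument; my sketch above indicates how the proof goes should it be spelled out.
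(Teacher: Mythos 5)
The paper offers no proof of this statement at all: it is quoted directly from \cite{EES69}, exactly as you anticipated in your closing paragraph, so the only thing to compare against is the paper's proof of its own generalization, Theorem \ref{caractEES}. Your sketch is correct and coincides with that proof's strategy — the easy direction via closure properties of synchronous relations, the hard direction via a B\"uchi-style encoding of an accepting run by finitely many word variables (one per state, with prefixes serving as positions and $EqLength$, $L_a$ extracting letters), which is precisely the paper's device of the words $z_0,\dots,z_k,y$; the only difference is that over an infinite alphabet the paper must take $A_=$ as a primitive of $S_{\mathfrak M}$, whereas in your finite-alphabet setting it is definable as $EqLength(x,y)\wedge\bigvee_{a\in\Sigma}(L_a(x)\wedge L_a(y))$, and this letter-coding step is also where the hypothesis $|\Sigma|\geq 2$ is used.
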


It is easy to check that for every finite alphabet $\Sigma$ the structure $\mathcal S$ is automatic, and moreover one can deduce from the above theorem that every automatic structure is indeed $FO-$interpretable in $\mathcal S$ (see e.g. \cite{BlGr00}).

In \cite{EES69}, the authors asked whether there is an appropriate notion of automata that captures this logic when $\Sigma$ is infinite. Choffrut and Grigorieff \cite{CG05a} recently solved this problem (and also other questions raised in \cite{EES69}) by introducing a notion of automata with constraints expressed as FO formulas. It appears that the automata notion they consider captures exactly ${\mathfrak M}$-recognizable relations for  the special case ${\mathfrak M}=(\Sigma;(P_a)_{a \in \Sigma})$.  

We can generalize the previous results in the following way.

\begin{defi}\label{defEES}
Let ${\mathfrak M}=(\Sigma;R_1,\dots,R_k)$ be a structure. We define the stucture
$$S_{\mathfrak M}=(\Sigma^*;EqLength,\preceq,A_{R_1},\dots,A_{R_k},A_=)$$
where
\begin{enumerate}[$\bullet$]
\item $EqLength(x,y)$ holds if and only if $x$ and $y$ have the same length;
\item $x\preceq y$ holds if and only if $x$ is a prefix of $y$;
\item for every $i$, $A_{R_i}(x_1,\dots,x_n)$ holds if and only if
there exist words $w_1,\dots,w_n \in \Sigma^*$ and symbols $a_1,\dots,a_n \in \Sigma$ such that: 
\begin{enumerate}[-]
\item $x_i=w_ia_i$ for every $i$;
\item all $w_i$'s have the same length;
\item $(a_1,\dots,a_n) \in R_i^{\mathfrak M}$.
\end{enumerate}
\item $A_{=}(x,y)$ holds if and only if $x$ and $y$ have the same length and the same last letter.
\end{enumerate}
\end{defi}

\begin{thm}\label{caractEES} Let ${\mathfrak M}=(\Sigma;R_1,\dots,R_k)$ be a structure, where $|\Sigma|\geq 2$.
\begin{enumerate}[\em(1)]
\item For every $n \geq 1$ and every $n-$ary relation $R$ over $\Sigma$, the relation $R$ is ${\mathfrak M}$-recognizable if and only if it is definable in $S_{\mathfrak M}$;
\item $FO(S_{\mathfrak M})$ reduces to $FO({\mathfrak M})$.
\end{enumerate}
\end{thm}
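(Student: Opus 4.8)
The plan is to derive both parts from the single observation that $S_{\mathfrak M}$ is \emph{${\mathfrak M}$-automatic} via the identity embedding $\mu=\mathrm{id}:\Sigma^*\to\Sigma^*$, supplemented by one genuinely new argument for the harder inclusion. Once we know that the domain $\Sigma^*$ and each basic relation $EqLength$, $\preceq$, every $A_{R_i}$ and $A_=$ is ${\mathfrak M}$-recognizable, Theorem~\ref{thm:hodgson2} applies directly: part~(2) is exactly Theorem~\ref{thm:hodgson2}(2), and Theorem~\ref{thm:hodgson2}(1) yields the direction ``definable in $S_{\mathfrak M}$ $\Rightarrow$ ${\mathfrak M}$-recognizable'' of part~(1), since under $\mu=\mathrm{id}$ the image of a definable relation is the relation itself. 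So the whole proof reduces to two tasks: (a) building explicit ${\mathfrak M}$-automata for the basic relations, and (b) the converse inclusion of part~(1).

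For (a) I would give direct constructions. $\Sigma^*$ is recognized by a one-state loop labelled $x=x$. For $EqLength$ use a single initial-and-terminal state looping on $\bigwedge_j\neg P_\#(x_j)$, so that a tuple is accepted iff no padding symbol ever occurs, i.e.\ iff all components have equal length. For $A_{R_i}$ take two states $q_0$ (initial) and $q_1$ (terminal, a sink), a loop $(q_0,\bigwedge_j\neg P_\#(x_j),q_0)$, and a final edge $(q_0,R_i(x_1,\dots,x_{n_i})\wedge\bigwedge_j\neg P_\#(x_j),q_1)$; as $q_1$ has no outgoing transition it is reached only on the last letter, so the automaton accepts exactly the equal-length tuples whose last letters lie in $R_i$. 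The relation $A_=$ is obtained identically, replacing the atom $R_i$ by the first-order equality $x_1=x_2$, and $\preceq$ by a two-phase automaton looping on $x_1=x_2\wedge\neg P_\#(x_1)$ along the common prefix and then on $P_\#(x_1)\wedge\neg P_\#(x_2)$. What makes all of this possible—and should be stressed against Example~\ref{exa:aa}—is that a transition formula is evaluated on the symbols read by the \emph{distinct} heads at the \emph{same} position, so testing equality, or any first-order constraint, \emph{across tapes} is legitimate, whereas comparing two positions on one tape is not.

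For (b) I route through Proposition~\ref{logicauto}: an ${\mathfrak M}$-recognizable relation is $MSO({\mathfrak M})$-definable, so it suffices to translate each $MSO({\mathfrak M})$-sentence into an equivalent $FO(S_{\mathfrak M})$-formula, in the spirit of the classical proof that first-order logic over the Eilenberg--Elgot--Shepherdson structure captures monadic logic on positions. With the free words $x_1,\dots,x_n$ fixed, I first guess a longest component $\ell$ and represent positions by the nonempty prefixes of $\ell$, the order on positions being $\preceq$; a set of positions is encoded by an existentially quantified word $X$ with $EqLength(X,\ell)$, the membership ``the position of length $m$ belongs to $X$'' being read off as $\neg A_=(p,p')$, where $p\preceq\ell$ and $p'\preceq X$ are the prefixes of length $m$ (the encoding marks a position by making $X$ differ from $\ell$ there, which is where $|\Sigma|\ge 2$ enters). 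Set quantifiers thereby become first-order quantifiers over equal-length words. Finally each predicate $\alpha_F(x)$ is evaluated at the position of length $m$ by a finite case split on which components are already exhausted there—the condition ``$|x_j|<m$'' being $\exists q(q\prec p\wedge EqLength(q,x_j))$—after which $F$ reduces to a first-order ${\mathfrak M}$-formula in the non-padded arguments; this formula is simulated inside $S_{\mathfrak M}$ by presenting every symbol it mentions, namely the live components and the values of its ${\mathfrak M}$-quantifiers, as the last letter of a word of the common length $m$, so that its atoms and equalities are expressed verbatim by $A_{R_i}$ and $A_=$.

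The main obstacle is precisely this last translation, where two points need care. First, monadic set quantification must be faithfully simulated by quantification over words of the correct length, uniformly in the \emph{a priori} unbounded length $|\ell|$; the differ-from-$\ell$ encoding achieves this with $A_=$ alone and forces the hypothesis $|\Sigma|\ge 2$. Second, the constraints $F$ are \emph{arbitrary} first-order ${\mathfrak M}_\#$-formulas, whereas $S_{\mathfrak M}$ offers only the atomic predicates $A_{R_i}$ and $A_=$ besides $EqLength$ and $\preceq$; the device resolving this is the observation that at a fixed position of length $m$ every relevant symbol can be presented as the last letter of a word of that same length $m$, so $A_{R_i}$ and $A_=$ capture exactly the atomic relations of ${\mathfrak M}$ among them while the padding symbol is eliminated beforehand by the finite case split. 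Everything else is a routine induction on formulas, appealing to the closure properties of Proposition~\ref{prop:cloture} where convenient.
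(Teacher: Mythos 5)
Your proof is correct. Parts (a) and (2) coincide exactly with the paper's argument: the paper likewise observes that all base relations of $S_{\mathfrak M}$ are ${\mathfrak M}$-recognizable, hence $S_{\mathfrak M}$ is ${\mathfrak M}$-automatic, and invokes Theorem \ref{thm:hodgson2} for both the easy inclusion of (1) and for (2). Where you diverge is the converse inclusion of (1): the paper does \emph{not} pass through Proposition \ref{logicauto}, but encodes an accepting run of the ${\mathfrak M}$-automaton directly in $S_{\mathfrak M}$, using $k+2$ existentially quantified words $z_0,\dots,z_k,y$ of length $m+1$, where the state at position $i$ is the unique $z_j$ whose $i$-th symbol agrees with that of $y$ (read off via $A_=$). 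Your route instead proves the stronger, standalone statement that every $MSO({\mathfrak M})$-definable relation is $FO(S_{\mathfrak M})$-definable, by uniformly simulating set variables as words differing from a longest component $\ell$ (again via $A_=$, and this is where $|\Sigma|\ge 2$ enters, exactly as it does implicitly in the paper's run encoding). This buys modularity --- you get $MSO({\mathfrak M})\equiv FO(S_{\mathfrak M})$ as a by-product and reuse B\"uchi's run encoding only once, inside Proposition \ref{logicauto} --- at the cost of depending on that proposition and of a longer translation; the paper's direct encoding is shorter and self-contained given the automaton. Notably, your treatment of the $\alpha_F$ predicates (finite case split on exhausted components to eliminate $P_\#$, then presenting every symbol, including ${\mathfrak M}$-quantified ones, as the last letter of a word of the common length so that atoms become $A_{R_i}$ and $A_=$) makes explicit a step that both approaches need and that the paper's sketch glosses over entirely, since its run-encoding formula must also express that the tuple of symbols at position $i$ satisfies the transition formula.
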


\begin{proof}
\hfill
\begin{enumerate}[(1)]
\item It is easy to check that all base relations of $S_{\mathfrak M}$ are ${\mathfrak M}$-recognizable, which implies that $S_{\mathfrak M}$ is $\mathfrak M$-automatic, and thus by Theorem \ref{thm:hodgson2} every relation definable in $S_{\mathfrak M}$ is ${\mathfrak M}$-recognizable.

For the converse one can adapt again B\"uchi's technique of encoding runs of automata by monadic second-order variables.
Assume that $R \subseteq (\Sigma^*)^n$ is recognized by some $\mathfrak M$-automaton $\mathcal A=(Q,n,\Sigma,{\mathfrak M},E,I,T)$ whose set of states is $Q=\{q_0,\dots,q_k\}$. We can define $R$ in $S_{\mathfrak M}$ by a formula $\varphi(w_1,\dots,w_n)$ which expresses the existence of a successful path of $\mathcal A$, say $(q_{j_0},q_{j_1},\dots,q_{j_{m}})$, labelled by $\langle w_1,\dots,w_n \rangle$. The formula encodes the path with $k+2$ words $z_0,z_1,\dots,z_k,y$ whose length is $m+1$ (that is, $|\langle w_1,\dots,w_n \rangle|+1$), and such that for every $i$, the word $z_{j_i}$ is the only word among $z_0,z_1,\dots,z_k$ whose $i-$th symbol equals the $i-$th symbol of $y$. 
That is, the variable $y$ serves to identify $z_{j_i}$ for every position $i$ (with the help of the predicate $A_{=}$).

\item This is an immediate consequence of Theorem \ref{thm:hodgson2} together with the fact that $S_{\mathfrak M}$ is $\mathfrak M$-automatic.
\end{enumerate}

\end{proof}

In Section \ref{sec:applications} we will improve item $(2)$ of the above theorem by proving that even the monadic second-order chain logic of $S_{\mathfrak M}$ reduces to $FO({\mathfrak M})$.

\section{A special case of the Feferman-Vaught composition theorem}\label{sec:FV}

In this section we put in evidence the strong relationship between ${\mathfrak M}$-automata and a special case of  the Feferman-Vaught composition theorem. The Feferman-Vaught method presented in \cite{FV59} generalizes Mostowski's work \cite{Mos52} about products of structures. Let us recall some useful notions and results from \cite{Mos52,FV59}.

%We now recall the Feferman-Vaught composition method \cite{FV59}. More precisely we shall focus on the notion of generalized power introduced in \cite{FV59}, which generalizes the notion of power of structures studied by Mostowski \cite{Mos52}. 

%Mostowski defines 
%$\prod_I {\mathfrak M}$
\begin{defi}\label{powmost}
Let ${\mathfrak M}=(\Sigma;R^{\mathfrak M}_1,\dots,R^{\mathfrak M}_k)$ be a   structure, and let $I$ be a non empty set.  
The {\em direct power} of a ${\mathfrak M}$ with respect to $I$ is defined as the structure 
$${\mathfrak N}=(\Sigma^I; R^{\mathfrak N}_1,\dots,R^{\mathfrak N}_k)$$ such that 
\begin{enumerate}[$\bullet$]
\item the domain of $\mathfrak N$ is the set $\Sigma^I$ of sequences $f:I \to \Sigma$
\item for every $j \in \{1,\dots,k\}$, if $R_j$ is $n-$ary then for every $n-$tuple $(f_1,\dots,f_n)$ of elements of $\Sigma^I$, we have $(f_1,\dots,f_n) \in R^{\mathfrak N}_j$ if and only if 
$(f_1(i),\dots,f_n(i)) \in R^{\mathfrak M}_j$ for every $i \in I$.
\end{enumerate}
\end{defi}

\begin{exa}\label{ex:omega+}
Let ${\mathfrak M}=(\omega;+)$, and $I=\omega$. The direct power of $\mathfrak M$ with respect to $I$ is the structure $\mathfrak N$ with domain the set of sequences $f:\omega \to \omega$ (in other words, the set of $\omega-$words over the alphabet $\omega$), and such that $(f_1,f_2,f_3) \in +^{\mathfrak N}$ if and only if $f_1(i)+f_2(i)=f_3(i)$ for every $i \in \omega$. 
\end{exa}

Mostowski proves that the evaluation of FO formulas in the direct power ${\mathfrak N}$ reduces to the evaluation of formulas in the factor structure $\mathfrak M$ and formulas in the structure ${\mathfrak S}=(S(I);\subseteq)$ (the {\em index structure}) where $S(I)$ denotes the power set of $I$, and $\subseteq$ is interpreted as the inclusion relation. Note that the FO theory of $\mathfrak S$ is a variant of the MSO theory of $I$.

\begin{defi}\label{def:reduction}
Let $R$ be an $m-$ary relation over elements of $\Sigma^I$. A {\em reduction sequence} for $R$ (with respect to the structures $\mathfrak{M,S}$) is a sequence $\xi=(G,\theta_1,\dots,\theta_l)$ such that
\begin{enumerate}[$\bullet$]
\item $G$ is a formula in the language of $\mathfrak S$;
\item $\theta_1,\dots,\theta_l$ are formulas in the language of $\mathfrak M$;
\item for every $m$-tuple $(f_1,\dots,f_m)$ of elements of $\Sigma^I$, 
we have $(f_1,\dots,f_m) \in R$ 
if and only if 
$${\mathfrak S} \models G(T_1,\dots,T_l)$$ 
where 
$$T_i=\big\{ x \in I \ | \  {\mathfrak M} \models \theta_i(f_1(x),\dots,f_m(x)) \} {\textnormal{ \ \ for 
\ every \ }} i \in \{1,\dots,l 
\big\}.$$ 
\end{enumerate}
\end{defi}\medskip

\noindent{\bf Example \ref{ex:omega+} (continued).} 
\begin{enumerate}[$\bullet$]
\item The base relation $+^{\mathfrak N}$ of $\mathfrak N$ admits a reduction sequence with respect to ${\mathfrak M} = (\omega;+)$ and ${\mathfrak S}=(S(\omega);\subseteq)$. Indeed we have $(f_1,f_2,f_3) \in +^{\mathfrak N}$ if and only if the set of indexes $i$ such that $f_1(i)+f_2(i)=f_3(i)$ equals $\omega$, that is if 
$${\mathfrak S} \models \forall Y \  Y \subseteq T $$ 
where 
$$T=\{ i \in \omega \ | \  {\mathfrak M} \models f_1(i)+f_2(i)=f_3(i) \}.$$
Thus $+^{\mathfrak N}$ admits the reduction sequence $\xi=(G,\theta_1)$, where ~$G(X): \forall Y \ Y \subseteq X$ and $\theta_1(x,y,z): x+y=z$.

More generally all base relations of a direct power of a structure $\mathfrak M$ admit a reduction sequence. 
\item Consider the formula 
$$F(x): \forall y_1 \forall y_2((\exists z_1 \ y_1+z_1=x \wedge \exists z_2 \ y_2+z_2=x)\rightarrow \exists u (y_1+u=y_2 \vee y_2+u=y_1))$$
We have ${\mathfrak N} \models F(f)$ if and only if $f$ admits at most one non-null element $f(i)$, that is, if the set of elements $i$ such that $f(i)+f(i) \ne f(i)$ contains at most one element. The unary relation defined by $F$ admits the reduction sequence $\xi=(G,\theta_1)$ where 
$$G(x):  \forall Y \forall Z ((Y \subseteq X \wedge Z \subseteq X) \rightarrow  (Y \subseteq Z \vee Z \subseteq Y))$$
and $\theta_1(x): x+x \ne x$.  
\end{enumerate}

\begin{thm}[Mostowski \cite{Mos52}]\label{MosTh}
Let $\mathfrak M$ be a   structure, $I$ be a non empty set, and let $\mathfrak N$ be the direct power of $\mathfrak M$ with respect to $I$. Then
\begin{enumerate}[\em(1)]

\item one can compute effectively a reduction sequence for every relation which is FO definable in $\mathfrak N$; 
\item the FO theory of ${\mathfrak  N}$ reduces to the FO theories of ${\mathfrak  M}$ and ${\mathfrak  S}$.
\end{enumerate}
\end{thm}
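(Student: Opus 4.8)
The plan is to prove item (1) by induction on the structure of first-order formulas, and then to read off item (2) as the special case of sentences. Throughout, for each formula $\phi(v_1,\dots,v_m)$ in the language of ${\mathfrak N}$ (which is the same signature as ${\mathfrak M}$) I seek a reduction sequence $(G,\theta_1,\dots,\theta_l)$ for the relation it defines, and I check that the construction is purely syntactic, hence effective. The one preliminary fact I will use freely is that in ${\mathfrak S}=(S(I);\subseteq)$ every Boolean operation on subsets of $I$ is first-order definable from $\subseteq$: the least and greatest elements $\emptyset$ and $I$, as well as union, intersection and complement, all have the usual order-theoretic definitions, so I may write them inside the index formula $G$.

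For the base case, an atomic formula $R_j(v_{i_1},\dots,v_{i_n})$ defines in ${\mathfrak N}$ the relation holding of $(f_1,\dots,f_m)$ exactly when $(f_{i_1}(x),\dots,f_{i_n}(x))\in R_j^{\mathfrak M}$ for every $x\in I$, and equality is analogous. Hence the set $T=\{x\in I\mid{\mathfrak M}\models R_j(f_{i_1}(x),\dots,f_{i_n}(x))\}$ must equal $I$, and a reduction sequence is $(G,\theta_1)$ with $\theta_1$ the atomic ${\mathfrak M}$-formula itself and $G(X)\colon\forall Y\,(Y\subseteq X)$, asserting $X=I$. The Boolean cases are routine: for negation one keeps the list $\theta_1,\dots,\theta_l$ unchanged (the sets $T_i$ do not change) and replaces $G$ by $\neg G$; for conjunction and disjunction one concatenates the two lists of ${\mathfrak M}$-formulas and combines the two index formulas with $\wedge$ or $\vee$, after reindexing the set variables of the second formula so that they address the correct entries of the merged list.

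The crux is the existential case $\phi(v_1,\dots,v_m)=\exists v_{m+1}\,\psi(v_1,\dots,v_m,v_{m+1})$. By induction $\psi$ has a reduction sequence $(G,\theta_1,\dots,\theta_l)$ in which the $\theta_i$ have free variables $v_1,\dots,v_{m+1}$. The guiding idea is that $f_{m+1}$ may be chosen pointwise and independently, so all that matters at a point $x$ is which \emph{pattern} $\{\,i\mid{\mathfrak M}\models\theta_i(f_1(x),\dots,f_m(x),f_{m+1}(x))\,\}$ the chosen value realizes. For each $J\subseteq\{1,\dots,l\}$ I introduce the ${\mathfrak M}$-formula in $v_1,\dots,v_m$
$$\sigma_J\colon\ \exists v_{m+1}\Big(\bigwedge_{i\in J}\theta_i(v_1,\dots,v_m,v_{m+1})\wedge\bigwedge_{i\notin J}\neg\theta_i(v_1,\dots,v_m,v_{m+1})\Big),$$
expressing that pattern $J$ is realizable at the current point. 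A choice of $f_{m+1}$ then induces sets $U_J=\{x\mid f_{m+1}(x)\text{ realizes }J\}$, which partition $I$ and satisfy $U_J\subseteq P_J:=\{x\mid{\mathfrak M}\models\sigma_J(f_1(x),\dots,f_m(x))\}$, and conversely any partition subordinate to the $P_J$ arises from some $f_{m+1}$ by picking a witness at each point; moreover $T_i=\bigcup_{J\ni i}U_J$ in either direction. I therefore take as reduction sequence for $\phi$ the family $(\sigma_J)_{J\subseteq\{1,\dots,l\}}$ together with the ${\mathfrak S}$-formula
$$G'\big((P_J)_J\big)\colon\ \exists(X_J)_J\Big(\text{``$(X_J)_J$ is a partition of $I$''}\wedge\bigwedge_J X_J\subseteq P_J\wedge G\big({\textstyle\bigcup_{J\ni 1}X_J,\dots,\bigcup_{J\ni l}X_J}\big)\Big),$$
which is legitimate in ${\mathfrak S}$ since partitions, finite unions and containment are all definable from $\subseteq$. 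This is the step that genuinely exploits the expressive power of ${\mathfrak S}$, and I expect it to be the main obstacle: the delicate points are verifying that the type-formulas $\sigma_J$ capture exactly the freedom available in choosing $f_{m+1}$, and keeping the bookkeeping straight between the $2^l$ set variables $X_J$ and the original arguments $T_1,\dots,T_l$ of $G$.

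Finally, item (2) is the case of a sentence $\phi$, viewed as a $0$-ary relation. Its reduction sequence $(G,\theta_1,\dots,\theta_l)$ then consists of a ${\mathfrak S}$-formula $G$ and ${\mathfrak M}$-\emph{sentences} $\theta_i$, so each $T_i=\{x\in I\mid{\mathfrak M}\models\theta_i\}$ equals $I$ if ${\mathfrak M}\models\theta_i$ and $\emptyset$ otherwise. Deciding the finitely many $\theta_i$ with an oracle for $FO({\mathfrak M})$, substituting the definable constants $\emptyset$ or $I$ for the corresponding arguments of $G$, one is left with a sentence of ${\mathfrak S}$ to evaluate; hence whether ${\mathfrak N}\models\phi$ is decided relative to $FO({\mathfrak M})$ and $FO({\mathfrak S})$. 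As every inductive step is a syntactic transformation, the whole reduction sequence is produced effectively, which yields both the effectiveness asserted in (1) and the reduction in (2).
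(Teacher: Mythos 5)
The paper offers no proof of this theorem to compare against: it is quoted directly from Mostowski \cite{Mos52} (and subsumed by Feferman--Vaught \cite{FV59}). Your reconstruction is correct, and it is essentially the classical argument: induction on formulas, with Boolean connectives handled by operating on $G$ alone, and the quantifier step handled by introducing the realizable-pattern formulas $\sigma_J$ and existentially quantifying, in ${\mathfrak S}$, a partition $(X_J)_J$ subordinate to the sets $P_J$; both directions of that equivalence check out (the converse direction silently uses a pointwise choice of witnesses, as does the classical proof). The only cosmetic deviation from the standard presentation is that Feferman and Vaught keep reduction sequences normalized so that the $\theta_i$ are mutually exclusive and exhaustive, which turns your on-the-fly construction of the patterns into mere bookkeeping, whereas you pay for the unnormalized sequences with the $2^l$ set variables at each quantifier step; the two variants are interchangeable and equally effective.
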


Mostowski also proves that $FO({\mathfrak  S})$ is decidable for every set $I$ (by elimination of quantifiers), which together with point $(2)$ in the above theorem implies that for every $I$ the FO theory of the power of ${\mathfrak M}$ with respect to $I$ reduces to the FO theory of ${\mathfrak M}$.\bigskip

\noindent{\bf Example \ref{ex:omega+} (continued).}  The FO theory of
$\mathfrak N$ is decidable, since it reduces to the FO theory of
$(\omega;+)$ which is decidable \cite{Pre29}.\medskip

Another important notion is the one of {\em weak direct power} of a structure. In this variant we consider a   structure ${\mathfrak M}=(\Sigma;R^{\mathfrak M}_1,\dots,R^{\mathfrak M}_k,P_e^{\mathfrak M})$ with some distinguished element $e \in \Sigma$, and where $P_e(x)$ holds in $\mathfrak M$ if and only if $x=e$. The weak power of ${\mathfrak M}$ with respect to $I$ is defined in the same way as in Definition \ref{powmost} but here the domain of ${\mathfrak N}$ is the set ${\mathfrak M}^{(I)}_e$ of sequences $f: I \to \Sigma$ such that $f(i)\ne e$ for finitely many values of $i$. Mostowski proves that Theorem \ref{MosTh} still holds for weak direct powers, with the following modifications:
\begin{enumerate}[$\bullet$]
\item for the index structure $\mathfrak S$ one considers the structure ${\mathfrak S}_{fin}=(S^+(I);\subseteq)$ where $S^+(I)$ denotes the set of finite subsets of $I$. 
\item one considers only reduction sequences $\xi=(G,\theta_1,\dots,\theta_l)$ such that ${\mathfrak M} \models \neg \theta_i(e,\dots,e)$ for every $i$ (this condition ensures that all sets $T_i$ in Definition \ref{def:reduction} are finite).
\end{enumerate}

Note that 
${\mathfrak S}_{fin}$ is a FO variant of the weak MSO theory of $I$. In \cite{Mos52} it is shown that $FO({\mathfrak S}_{fin})$ is decidable for every set $I$. Therefore $FO(\mathfrak N)$ reduces to 
$FO(\mathfrak M)$.

\begin{exa}{(Decidability of Skolem arithmetic \cite{Mos52})}\label{exa:skolem} We revisit here Example \ref{exa:skolemauto}. 
Consider the structure ${\mathfrak M}=(\omega;+,P_0)$ where $0$ is the distinguished element, and $I=\omega$. Then the weak direct power of $\mathfrak M$ with respect to $I$ is the structure ${\mathfrak N}= (\omega^{(\omega)}_0;+,P_0)$ whose domain is the set of sequences $f:\omega \to \omega$ such that $f(i)\ne 0$ for finitely many values of $i$, $+$ denotes the graph of addition of sequences $f:\omega \to \omega$, and $P_0(f)$ holds only for $f=0$. It follows from Mostowski's result that $FO({\mathfrak N})$ is decidable since it reduces to $FO({\mathfrak M})$ which is decidable \cite{Pre29}. Now observe that the application $h: \omega^{(\omega)}_0 \to \omega\setminus \{0\}$ which maps every sequence $f \in \omega^{(\omega)}_0$ to the integer $h(f)=2^{f(0)}3^{f(1)}\dots,$ defines an isomorphism between ${\mathfrak N}$ and the structure $(\omega\setminus \{0\};\times,P_1)$ where $\times$ denotes the graph of multiplication, and $P_1(x)$ holds if and only if $x=1$. Therefore the FO theory of the latter structure  is decidable.
\end{exa}

Feferman and Vaught \cite{FV59} generalize Mostowski's technique by allowing index structures of the form ${\mathfrak S}=(S(I);\subseteq,S_1,\dots,S_m)$ where the $S_i$'s denote any relations.

\begin{defi}
Let ${\mathfrak M}=(\Sigma;R^{\mathfrak M}_1,\dots,R^{\mathfrak M}_k)$ be a   structure, $I$ be a set, and let ${\mathfrak S}=(S(I);\subseteq,S_1,\dots,S_m)$ where the $S_i$'s denote relations. We call {\em generalized power\footnote{Our definition is a slight modification of the original definition. Indeed Feferman and Vaught define {\em the} generalized power of $\mathfrak M$ with respect to $\mathfrak S$ as the structure with domain $\Sigma^I$ and with infinitely many relations $P_i$, one for each relation which admits a reduction sequence.}
 of ${\mathfrak M}$ with respect to ${\mathfrak S}$} every   structure of the form  
$${\mathfrak N}=(\Sigma^I; P_{1}, P_2, \dots, P_n )$$
such that all relations $P_i$ admit a reduction sequence with respect to $\mathfrak M$ and $\mathfrak S$.
\end{defi}

\begin{thm}[Feferman-Vaught \cite{FV59}]\label{fvthm}
If $\mathfrak N$ is a generalized power of $\mathfrak M$ with respect to $\mathfrak S$, then  
Theorem \ref{MosTh} holds for $\mathfrak{M,N}$ and $\mathfrak S$.
\end{thm}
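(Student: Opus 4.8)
The plan is to reduce the Feferman-Vaught theorem to Mostowski's theorem (Theorem~\ref{MosTh}) by the same syntactic machinery, taking advantage of the fact that the only difference between the two settings is the richer signature of the index structure $\mathfrak S=(S(I);\subseteq,S_1,\dots,S_m)$. The crucial observation is that both theorems assert the \emph{same three conclusions} relative to a reduction sequence: computability of reduction sequences, and the reduction of $FO(\mathfrak N)$ to $FO(\mathfrak M)$ together with $FO(\mathfrak S)$. Since the extra relations $S_1,\dots,S_m$ live \emph{only} in the index structure, they play no role whatsoever in the factor component of a reduction sequence; they simply enlarge the stock of admissible formulas $G$ in the language of $\mathfrak S$.

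First I would establish the key inductive claim: every relation $P\subseteq(\Sigma^I)^m$ that is FO-definable in $\mathfrak N$ admits a reduction sequence with respect to $\mathfrak M$ and $\mathfrak S$, and moreover this sequence is effectively computable from a defining formula. I would proceed by structural induction on the formula $\psi$ defining $P$. The base case consists of the relations $P_1,\dots,P_n$ of $\mathfrak N$ together with equality, all of which admit reduction sequences by hypothesis (equality of two sequences $f_1=f_2$ has reduction sequence $(G,\theta_1)$ with $G(X):\ \forall Y\ Y\subseteq X$ and $\theta_1(x,y):\ x=y$). The boolean cases are routine: given reduction sequences for $P$ and $Q$, one merges their factor-formula lists and combines the $\mathfrak S$-formulas via $\wedge$ (renaming set-variables appropriately), while negation simply negates $G$. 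The only step where anything genuinely happens is the existential quantifier.

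The main obstacle — exactly as in Mostowski's original argument — is the treatment of $\exists f\,\psi(f,f_1,\dots,f_m)$, where one must eliminate an existentially quantified \emph{sequence} variable and express the result by an existentially quantified \emph{set} quantifier in $\mathfrak S$ together with the same factor formulas. Here I would invoke the standard Feferman-Vaught \emph{Hintikka normal form} (or the ``fundamental lemma'' on decomposition into $r$-types): given the list of factor formulas $\theta_1,\dots,\theta_l$ occurring in the reduction sequence for $\psi$, one considers the finitely many complete $r$-types over these formulas, encodes for each type the set of indices $i\in I$ realizing it, and shows that whether $\exists f\,\psi$ holds depends only on the tuple of these ``colour sets'' — a condition one can read off inside $\mathfrak S$, after existentially quantifying the set-variables recording the value chosen for $f$ at each index. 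Crucially, this combinatorial core is \emph{identical} to Mostowski's, because it manipulates only subsets of $I$ and the inclusion relation among them; the presence of the additional relations $S_1,\dots,S_m$ in $\mathfrak S$ neither helps nor hinders the construction of $G$, as $G$ is permitted to mention them but need not.

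Finally I would assemble the theorem's three conclusions from this claim. Computability of reduction sequences (conclusion~(1) of Theorem~\ref{MosTh}) is immediate from the effectiveness of the induction. For the reduction of theories (conclusion~(2)), given a sentence $\Psi$ of $FO(\mathfrak N)$, I would apply the claim to the $0$-ary relation it defines, obtaining a reduction sequence $(G,\theta_1,\dots,\theta_l)$ in which the $\theta_i$ are \emph{sentences} of $\mathfrak M$; evaluating each $\theta_i$ in $\mathfrak M$ yields a truth value that determines, for each index $i\in I$, whether $i$ belongs to $T_i$ (here each $T_i$ is either $I$ or $\emptyset$ according as $\mathfrak M\models\theta_i$), whence $\mathfrak N\models\Psi$ iff $\mathfrak S\models G(T_1,\dots,T_l)$ with the $T_i$ thus fixed — a single query to $FO(\mathfrak M)$ followed by a single query to $FO(\mathfrak S)$. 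This establishes that $FO(\mathfrak N)$ reduces to the FO theories of $\mathfrak M$ and $\mathfrak S$, completing the proof. I would remark that the entire argument is literally Mostowski's, carried out verbatim, with ``$\mathfrak S=(S(I);\subseteq)$'' replaced throughout by the richer ``$\mathfrak S=(S(I);\subseteq,S_1,\dots,S_m)$''; this is precisely why the statement can be phrased as ``Theorem~\ref{MosTh} holds for $\mathfrak M,\mathfrak N$ and $\mathfrak S$''.
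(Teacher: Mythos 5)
The paper does not actually prove this statement: it is quoted directly from Feferman and Vaught \cite{FV59}, so there is no in-paper argument to compare against. Your proposal is a correct sketch of the classical argument from that source: induction on formulas, with the base relations handled by the definition of generalized power, routine Boolean steps, and the existential-quantifier step handled by the type/colour-set decomposition; in particular you correctly isolate the one point that makes the generalization work, namely that this existential step manipulates only $\subseteq$ over $S(I)$, so adding $S_1,\dots,S_m$ to the index structure merely enlarges the stock of admissible formulas $G$, which is exactly why Theorem~\ref{MosTh} carries over verbatim.
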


\begin{exa}
Let $I=\omega$, and let ${\mathfrak S}$ be the structure ${\mathfrak S}_{\omega}=(S(\omega); \subseteq,  \ll)$ where $x \ll y$ if and only if $x$ and $y$ are two singleton sets, say $x=\{m\}$ and $y=\{n\}$, such that $m<n$. This structure is a FO version of the MSO theory of $(\omega;<)$, which was shown to be decidable by B\"uchi \cite{Buchi62}. Let $\mathfrak M=(\omega;+)$. Consider the structure ${\mathfrak N}=(\omega^I; P_1)$ where $P_1(f)$ holds in $\mathfrak N$ if and only if there exist $j \in \omega$ such that $f(i)=0$ for every $i>j$. The structure ${\mathfrak N}$ is a generalized power of $\mathfrak M$ with respect to $\mathfrak S$. Indeed it is easy to check that $P_1(f)$ holds in $\mathfrak N$ if and only if 
$$\mathfrak S \models \exists X_1 \forall X_2 (X_1 \ll X_2 \rightarrow X_2 \subseteq T)$$
where
$$T=\{ i \in \omega \ | \  {\mathfrak M} \models f(i)+f(i)=f(i) \} 
.$$ 
By Theorem \ref{fvthm}, the FO theory of $\mathfrak N$ is decidable, since the FO theories of $\mathfrak M$ and ${\mathfrak S}_{\omega}$ are decidable by \cite{Pre29,Buchi62}.
\end{exa}

Feferman and Vaught also define the notion of {\em generalized weak power} of a structure ${\mathfrak M}=(\Sigma;R^{\mathfrak M}_1,\dots,R^{\mathfrak M}_k,P_e^{\mathfrak M})$ with respect to some index structure $\mathfrak S$. This notion generalizes the one of weak power by allowing to deal with index structures of the form $\mathfrak S=(S^+(I);\subseteq,S_1,\dots,S_n)$ where the $S_i$'s denote relations over $S^+(I)$. Feferman and Vaught prove that  Theorem \ref{MosTh} still holds for generalized weak powers, with the same modifications as for direct weak powers.

Let us consider the case where ${\mathfrak S}$ is the structure ${\mathfrak S}_{<\omega}=(S^+(\omega);\subseteq,\ll)$. The FO theory of ${\mathfrak S}_{<\omega}$ is a variant of the weak MSO theory of $(\omega;<)$. In this case there is a close correspondence between relations which admit a reduction sequence with respect to $\mathfrak S$, and 
 $MSO({\mathfrak M})$-definable relations, or equivalently $\mathfrak M$-recognizable relations
(by Theorem \ref{logicauto}).

Consider indeed an alphabet $\Sigma$, and the application $\mu$ which maps every finite word $w$ over $\Sigma$ to the $\omega-$word $\mu(w)=w \#^\omega$ over $(\Sigma \cup \{\#\})^\omega$. The word $\mu(w)$ can be seen as an element of $(\Sigma \cup \{\#\})^{(\omega)}_{\#}$. Given an $n-$ary relation $R$ over $\Sigma^*$, we set $\mu(R)=\{(\mu(w_1),\dots,\mu(w_n)) | (w_1,\dots,w_n) \in R \}$. The relation $\mu(R)$ can be seen as a subset of the set of sequences $f:\omega \to \omega \cup \{\#\}$ such that $f(i)\ne \#$ for finitely many values of $i$, i.e. as a subset of $(\Sigma \cup \{\#\})^{(\omega)}_{\#}$.

\begin{prop}\label{prop:transfer}
Let ${\mathfrak M}=(\Sigma;\dots)$ be a   structure. For every $n\geq 1$ and every $n-$ary relation $R$ over $\Sigma^*$, the relation $R$ is $MSO(\mathfrak M)$-definable if and only if the relation $\mu(R)$ admits a reduction sequence with respect to ${\mathfrak M}_{\#}$ and ${\mathfrak S}_{<\omega}$. 
\end{prop}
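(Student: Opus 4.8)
The plan is to build an explicit dictionary between the two formalisms, exploiting the fact (noted above) that $FO({\mathfrak S}_{<\omega})$ is merely a notational variant of the weak monadic second-order theory of $(\omega;<)$: a first-order ${\mathfrak S}_{<\omega}$-variable forced to be a singleton plays the role of a position, an arbitrary finite set variable plays the role of a weak MSO set variable, $\subseteq$ becomes inclusion/membership, and $\ll$ becomes the order $<$ on positions. Under this translation a reduction sequence $(G,\theta_1,\dots,\theta_l)$ for $\mu(R)$ becomes a weak MSO formula $\widetilde G(X_1,\dots,X_l)$ over $(\omega;<)$ whose free set variables are interpreted by the sets $T_i=\{x\mid {\mathfrak M}_\#\models\theta_i(\mu(w_1)(x),\dots,\mu(w_n)(x))\}$. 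The whole argument then reduces to matching these $T_i$ with the predicates $\alpha_{\theta_i}$ of $MSO({\mathfrak M})$, and to reconciling the finite domain $D=\{0,\dots,|\langle w\rangle|-1\}$ of $MSO({\mathfrak M})$ with the infinite index set $\omega$ of the reduction sequence.

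For the direction from $MSO({\mathfrak M})$-definability to reduction sequences, I would start from an $MSO({\mathfrak L})$-sentence $\psi$ defining $R$; only finitely many predicates $\alpha_{F_1},\dots,\alpha_{F_m}$ occur in it. I would add one extra formula $\theta_0:\ \bigvee_{j=1}^n \neg P_\#(x_j)$, whose associated set is $\{x\mid \text{some }\mu(w_j)(x)\neq\#\}=\{0,\dots,|\langle w\rangle|-1\}=D$; since $\mu(w_j)=w_j\#^\omega$ this set is finite and marks the active domain. I would then put $\theta_i:=F_i\wedge\theta_0$ for $1\le i\le m$, so that $T_i=\{x\in D\mid \alpha_{F_i}(x)\}\subseteq T_0$ is finite and satisfies the weak-power condition ${\mathfrak M}_\#\models\neg\theta_i(\#,\dots,\#)$. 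Finally I would take $G$ to be the ${\mathfrak S}_{<\omega}$-formula obtained from $\psi$ by relativizing every position quantifier and every set quantifier to $T_0$, replacing $<$ by $\ll$, and replacing each atom $\alpha_{F_i}(x)$ by $\{x\}\subseteq T_i$. Relativizing a weak MSO formula to the finite set $T_0=D$ yields exactly ordinary MSO over the finite chain $(D;<_D)$, so ${\mathfrak S}_{<\omega}\models G(T_0,\dots,T_m)$ holds iff $(D;<_D,(\alpha_{F_i}))\models\psi$, that is, iff $w\in R$.

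For the converse I would start from a reduction sequence $(G,\theta_1,\dots,\theta_l)$ for $\mu(R)$ meeting the weak-power condition. Because ${\mathfrak M}_\#\models\neg\theta_i(\#,\dots,\#)$, every position $x\ge|\langle w\rangle|$ (where all components equal $\#$) lies outside $T_i$, so each $T_i$ is contained in the finite initial segment $D$. Passing to the weak MSO form $\widetilde G(X_1,\dots,X_l)$, I must now replace evaluation over the infinite chain $(\omega;<)$ by evaluation over the finite chain $(D;<_D)$. This is the one genuinely delicate point, since $\widetilde G$ may quantify over positions lying beyond $D$. I would resolve it with the composition theorem for ordered sums of chains: writing $(\omega;<,\vec T)$ as the ordered sum of the finite marked chain $(D;<_D,\vec T)$ and the fixed unmarked infinite chain $(\{\,|\langle w\rangle|,|\langle w\rangle|+1,\dots\};<)$, the truth of $\widetilde G$ up to its quantifier rank is determined by the bounded-rank weak MSO theory of the finite part together with the constant contribution of the fixed tail. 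Hence there is an MSO sentence $\psi'$ over finite chains such that, whenever $\vec T\subseteq D$, one has $(\omega;<,\vec T)\models\widetilde G$ iff $(D;<_D,\vec T)\models\psi'$. Replacing in $\psi'$ each atom $x\in X_i$ by the predicate $\alpha_{\theta_i}(x)$ — legitimate because $x\in T_i$ and $\alpha_{\theta_i}(x)$ have the same meaning for $x\in D$ — produces the desired $MSO({\mathfrak L})$-sentence $\psi$ defining $R$.

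The main obstacle is precisely the infinite tail appearing in the converse: one must argue that, although $G$ ranges over all of $\omega$ while the markings are confined to the finite segment $D$, this extra quantification contributes only a bounded, $w$-independent amount of information that can be absorbed into a finite MSO formula. The composition theorem for ordered sums is exactly what licenses this step; everything else is a routine syntactic relativization once the identification of $FO({\mathfrak S}_{<\omega})$ with weak MSO over $(\omega;<)$ and of the sets $T_i$ with the predicates $\alpha_{\theta_i}$ is in place.
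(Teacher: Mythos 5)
Your right-to-left direction is sound and is actually more detailed than what the paper offers (the paper gives one worked example and leaves the proof to the reader): the weak-power condition ${\mathfrak M}_\#\models\neg\theta_i(\#,\dots,\#)$ does confine each $T_i$ to the finite segment $D$, and invoking the composition theorem for ordered sums (equivalently, an Ehrenfeucht--Fra\"iss\'e or automaton argument) to absorb the fixed unlabelled tail of $(\omega;<)$ is a legitimate way to turn $\widetilde G$ into an MSO sentence over the finite chain $(D;<_D)$, in which $x\in X_i$ may then be replaced by $\alpha_{\theta_i}(x)$.

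The left-to-right direction, however, has a genuine gap. By Definition \ref{def:reduction} (adapted to weak powers), a reduction sequence for $\mu(R)$ must characterize membership in $\mu(R)$ for \emph{every} $n$-tuple $(f_1,\dots,f_n)$ of elements of $(\Sigma\cup\{\#\})^{(\omega)}_{\#}$, not only for tuples of the form $(\mu(w_1),\dots,\mu(w_n))$. Your $G$ --- the relativization of $\psi$ to $T_0$ --- never checks that each $f_j$ lies in the range of $\mu$, i.e.\ that the non-$\#$ positions of each component form an initial segment of $\omega$. Concretely: take $n=1$, $R=\Sigma^*$ and $\psi$ a tautology; then your $G$ holds of every $f$, yet $f=a\#a\#^\omega$ is not in $\mu(R)$, so your sequence is not a reduction sequence for $\mu(R)$. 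Note also that it would not suffice to require only $T_0$ (the union of the supports) to be an initial segment: for $n=2$, $f_1=a\#a\#^\omega$ and $f_2=aaa\#^\omega$ give $T_0=\{0,1,2\}$, an initial segment, while $f_1$ is not in the range of $\mu$. The repair is exactly the clause that the paper's illustrative example (Example \ref{exa:msodef} revisited) insists on: add to the sequence, for each $j\in\{1,\dots,n\}$, the formula $\theta'_j(x_1,\dots,x_n):\ \neg P_\#(x_j)$, whose associated set is $T'_j=\{x\mid f_j(x)\ne\#\}$, and conjoin to $G$ the ${\mathfrak S}_{<\omega}$-formulas $\forall X\forall Y\,((Y\subseteq T'_j\wedge X\ll Y)\rightarrow X\subseteq T'_j)$ expressing that each $T'_j$ is downward closed (each $\theta'_j$ also satisfies the weak-power condition, since $\neg P_\#(\#)$ fails). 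On tuples satisfying these extra clauses every $f_j$ equals $\mu(w_j)$ for some $w_j\in\Sigma^*$ and $T_0=D$, and only then does your equivalence ``$G$ holds iff $(D;<_D,(\alpha_{F_i}))\models\psi$ iff $w\in R$'' become the required reduction-sequence property.
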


We shall illustrate this proposition by some example, and leave the
proof to the reader.\bigskip

\noindent{\bf Example \ref{exa:msodef} revisited}.
Let $\mathfrak M=(\omega;+)$, and let $R$ denote the set of words $w$ over $\omega$ that contain only even symbols. The relation $R$ is $MSO({\mathfrak M})$-definable by the sentence $\forall y \  \alpha_F(y)$,  
where $F(x): \ \exists z (z+z = x)$.

In this case $\mu(R)$ corresponds to the set of sequences $f \in (\omega \cup \{\#\})^{(\omega)}_{\#}$ such that 
\begin{enumerate}[$\bullet$]
\item the set of indexes $i$ such that $f(i)\ne \#$ is an initial segment of $\omega$;
\item for every $i$ such that $f(i) \ne \#$, $f(i)$ is even.  
\end{enumerate}
This implies that $\mu(R)$ admits a reduction sequence with respect to ${\mathfrak M}_{\#}$ and ${\mathfrak S}_{<\omega}$. Indeed we have $f \in \mu(A)$ if and only if 
$${\mathfrak S}_{<\omega} \models  \forall X \forall Y ((Y \subseteq T \wedge X \ll Y) \rightarrow X \subseteq T) \wedge \forall Y \  (Y \subseteq T \rightarrow Y \subseteq T')$$ where
$$T=\{ i \in \omega \ | \  {\mathfrak M}_{\#} \models f(i) \ne \# \} 
$$ and
$$T'=\{ i \in \omega \ | \  {\mathfrak M}_{\#} \models \exists z (z+z=f(i)) \} 
$$
The second part of the above formula is a direct translation of the MSO sentence $\forall y \  \alpha_F(y)$ used to prove the $MSO(\mathfrak M)$-definability of $R$. It is not difficult to prove that a similar translation is possible for every $MSO({\mathfrak M})$-definable relation. 

Recall that by Proposition \ref{logicauto}, $MSO(\mathfrak M)$-definability and $\mathfrak M$-recognizability are equivalent. Proposition \ref{prop:transfer} allows to revisit our previous results in terms of powers of structures:
\begin{enumerate}[$\bullet$]
\item The closure of $\mathfrak M$-recognizable relations under boolean operations, projection and cylindrification (Proposition \ref{prop:cloture}), could have been proven as a consequence of Proposition \ref{prop:transfer} together with the closure under definability of relations which admit a reduction sequence.

\item  If ${\mathfrak N}=(N;R_1,\dots,R_n)$ is an ${\mathfrak M}$-automatic structure then by Proposition \ref{logicauto} and \ref{prop:transfer} the image by $\mu$ of the domain and base relations of ${\mathfrak N}$ admit a reduction sequence with respect to ${\mathfrak M}_{\#}$ and ${\mathfrak S}_{<\omega}$. Thus 
every ${\mathfrak M}$-automatic structure is isomorphic to some relativized generalized weak power of ${\mathfrak M}_{\#}$ with respect to ${\mathfrak S}_{<\omega}$. Moreover by Feferman-Vaught Theorem, $FO({\mathfrak N})$ reduces to $FO({\mathfrak M}_{\#})$ and $FO({\mathfrak S}_{< \omega})$. Now one can reduce $FO({\mathfrak M}_{\#})$ to  $FO({\mathfrak M})$, and $FO({\mathfrak S}_{< \omega})$ is decidable by \cite{Buchi60}. Finally this proves that $FO({\mathfrak N})$ reduces to $FO({\mathfrak M})$, that is, Theorem \ref{thm:hodgson2}.

\item In the same way as $\mathfrak M$-automata over finite words correspond to generalized weak powers with respect to the index structure $\mathfrak S_{<\omega}$, one can define a notion of $\mathfrak M$-automata over $\omega-$words and show that it corresponds to generalized powers with respect to the index structure $\mathfrak S_{\omega}$.
\end{enumerate}

\section{Applications}\label{sec:applications}

\subsection{An application to monadic chain logic over iteration}\label{iteration}

We apply the previous results to improve a recent result of Kuske and Lohrey \cite{KL06}. This application was brought to our attention by Wolfgang Thomas.

In \cite{KL06} the authors consider decidability issues related to monadic second-order chain logic, and applications to pushdown systems. Given a structure ${\mathfrak A}=(A;<,\dots)$ where $<$ denotes a partial ordering, the monadic second-order chain logic of ${\mathfrak A}$, which will be denoted by $MSO^{ch}({\mathfrak A})$, is the fragment of the MSO theory of $\mathfrak A$ where monadic second order quantifications are restricted to chains (i.e. linearly ordered subsets) with respect to $<$. The logic $MSO^{ch}$ was first investigated in \cite{Tho87}.

Consider a   ${\mathfrak L}$-structure ${\mathfrak M}=(\Sigma;R_1,\dots,R_m)$. The {\em basic iteration of ${\mathfrak M}$} is the structure 
$${\mathfrak M}^*_{ba}=(\Sigma^*;\preceq, \hat{{A}}_{R_1}, \dots, \hat{{A}}_{R_m})$$
where, for every relational symbol $R_j$ with arity $n$,
$$\hat{{A}}_{R_j}=\{(ua_1,\dots,ua_n) \ | \ u \in \Sigma^* , (a_1,\dots,a_n) \in R_j^{\mathfrak M} \}$$

Kuske and Lohrey prove the following result.

\begin{thm}\cite[Theorem 4.10]{KL06} For every   structure $\mathfrak M$, the $MSO^{ch}$ theory of ${\mathfrak M}^*_{ba}$ reduces to the $FO$ theory of $\mathfrak M$.
\end{thm}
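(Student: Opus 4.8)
The plan is to reduce the $MSO^{ch}$ theory of the basic iteration ${\mathfrak M}^*_{ba}$ to $FO({\mathfrak M})$ by exhibiting the structure ${\mathfrak M}^*_{ba}$ (or rather its chains) as something that can be handled by the ${\mathfrak M}$-automatic machinery developed in Section \ref{sectionLogic}. The key observation is that $\preceq$ is a tree order on $\Sigma^*$, so a \emph{chain} with respect to $\preceq$ is simply a set of words that are pairwise comparable under the prefix relation, i.e. a set linearly ordered by $\preceq$. Such a chain is completely determined by its supremum (if it has a maximal element) or more generally by a single branch of the tree $\Sigma^*$. The idea is that a monadic chain variable ranging over a branch of $\Sigma^*$ can be coded by a single (possibly infinite, but here finite since we work with $\Sigma^*$) word, so that quantification over chains becomes quantification over words — exactly the kind of object that $S_{\mathfrak M}$ and Theorem \ref{caractEES} are designed to handle.

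Concretely, first I would make precise the coding of chains by words. A finite chain $C=\{u_0 \preceq u_1 \preceq \dots \preceq u_r\}$ is determined by its largest element $u_r$ together with the set of prefix-lengths that belong to $C$; since along a fixed branch the prefix relation is just comparison of lengths, an element $x$ of $\Sigma^*$ lies in $C$ if and only if $x \preceq u_r$ and the length of $x$ is marked. Thus a chain is encoded by a pair consisting of a word $u_r$ and a subset of $\{0,\dots,|u_r|\}$, and this subset is itself codable inside the ${\mathfrak M}$-automatic framework by working with an auxiliary ${\mathfrak M}$ augmented by a two-element marker component, exactly as the padding alphabet $\Sigma\cup\{\#\}$ was introduced. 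Under this coding, the atomic predicates of ${\mathfrak M}^*_{ba}$ translate into definable relations: $\preceq$ becomes the prefix relation, and $\hat{A}_{R_j}(ua_1,\dots,ua_n)$ asserts that the $n$ words share a common prefix $u$ and that their respective next letters satisfy $R_j$ in ${\mathfrak M}$ — which is precisely the kind of condition expressible by $A_{R_j}$ and $A_=$ in $S_{\mathfrak M}$ once one aligns positions. The second-order chain quantifiers become first-order quantifiers over the word-codes, together with a definable guard asserting that the quantified object really codes a chain.

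The main work is then to verify that, after this translation, every $MSO^{ch}$ formula over ${\mathfrak M}^*_{ba}$ becomes an $FO$ formula over a structure that is ${\mathfrak M}$-automatic (built from $\preceq$, $EqLength$, the $A_{R_j}$, $A_=$, and the marker predicates). I would check by inspection that each base relation listed is ${\mathfrak M}$-recognizable — this is routine, using Example \ref{exa:aa}-style automata and the closure properties of Proposition \ref{prop:cloture} — so that the resulting structure is ${\mathfrak M}$-automatic in the sense of Section \ref{automatic}. Then by Theorem \ref{thm:hodgson2}(2), its $FO$ theory reduces to $FO({\mathfrak M})$, and composing the translation with this reduction yields the desired reduction of $MSO^{ch}({\mathfrak M}^*_{ba})$ to $FO({\mathfrak M})$.

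I expect the main obstacle to be the faithful handling of the \emph{chain} restriction on second-order quantifiers. The power of the approach rests entirely on the fact that chains, unlike arbitrary subsets of $\Sigma^*$, are one-dimensional objects codable by words; an unrestricted monadic quantifier over $\Sigma^*$ would require coding an arbitrary \emph{subset} of a tree, which is not an ${\mathfrak M}$-recognizable object and would break the reduction. The delicate points are therefore (i) to show that the guard formula "$X$ codes a chain" is itself expressible and definable in $S_{\mathfrak M}$, and (ii) to confirm that the translation of $\hat{A}_{R_j}$ correctly synchronizes the common-prefix condition with the branching position, so that membership of the quantified word-code in a chain is tested at the right length. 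Once these bookkeeping details are settled, the reduction is an immediate application of Theorem \ref{thm:hodgson2}.
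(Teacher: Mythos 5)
There is a genuine gap, and it is located exactly where you placed your parenthetical remark ``possibly infinite, but here finite since we work with $\Sigma^*$''. The elements of a chain are finite words, but a chain itself is a \emph{set} of pairwise $\preceq$-comparable words, and such sets can be infinite: for instance $\{\varepsilon, a, aa, aaa, \dots\}$ is a chain, as is the set of all finite prefixes of any infinite branch of the tree $\Sigma^*$. The definition of $MSO^{ch}$ in the paper restricts second-order quantification to chains, i.e.\ linearly ordered subsets, with no finiteness requirement, and infinite chains are essential to the expressive power of chain logic (they are what lets it speak about branches). Your coding of a chain by its largest element $u_r$ plus a marked set of lengths therefore only captures \emph{finite} chains; an infinite chain has no largest element and cannot be coded by a finite word at all. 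Consequently the reduction to an ${\mathfrak M}$-automatic structure over finite words and the appeal to Theorem \ref{thm:hodgson2} cannot work: that machinery (finite-word ${\mathfrak M}$-automata, weak powers, WMSO of $(\omega;<)$) is intrinsically about finite objects, and what you would actually prove is a reduction of the \emph{weak} chain logic (quantification over finite chains only), which is a strictly weaker statement.

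The paper's proof fixes precisely this point by leaving the realm of finite words. A chain $X$ is coded by a \emph{pair of $\omega$-words} $(u,v)$: $u$ is the ``direction'' (every element of $X$ is a prefix of $u$; when $X$ is infinite this $u$ is unique), and $v$ marks membership via the convention $v[i]=u[i]$ iff the length-$i$ prefix of $u$ lies in $X$. This reduces $MSO^{ch}({\mathcal S}_{\mathfrak M})$ to the FO theory of a structure ${\mathcal S}'_{\mathfrak M}$ with domain $\Sigma^\omega$ (following Thomas's technique), and that structure is then shown to be a \emph{generalized power} of ${\mathfrak M}$ with respect to the index structure ${\mathfrak S}_{\omega}=(S(\omega);\subseteq,\ll)$ --- the full power-set structure, not the weak one --- by exhibiting reduction sequences for each base relation. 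The conclusion then follows from the Feferman--Vaught Theorem together with B\"uchi's decidability of $FO({\mathfrak S}_\omega)$ (i.e.\ MSO of $(\omega;<)$), rather than from the finite-word automatic-structure theorem. Your intuition about the one-dimensionality of chains, the guard formula ``$X$ codes a chain'', and the translation of $\hat{A}_{R_j}$ via aligned positions is sound and mirrors the paper; what is missing is the passage to $\omega$-words and to generalized (strong) powers, without which infinite chains are simply out of reach.
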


We can improve this result by replacing the structure ${\mathfrak M}^*_{ba}$ with the structure ${\mathcal S}_{\mathfrak M}$ which we introduced in Section \ref{automatic}.

\begin{thm}
For every   structure ${\mathfrak M}=(\Sigma;R_1,\dots,R_m)$,  the $MSO^{ch}$ theory of 
${\mathcal S}_{\mathfrak M}=(\Sigma^*;EqLength,\preceq,{{A}_{R_1}}, \dots, {{A}_{R_m}},A_=)$ reduces to the $FO$ theory of ${\mathfrak M}$.
\end{thm}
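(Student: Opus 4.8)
The plan is to upgrade the reduction $FO(\mathcal S_{\mathfrak M})\to FO(\mathfrak M)$ of Theorem~\ref{caractEES}(2) to the level of chain logic by re-running the Feferman--Vaught machinery of Section~\ref{sec:FV}, this time in the $\omega$-word setting governed by the index structure $\mathfrak S_{\omega}$, and by showing that monadic \emph{chain} quantification translates into ordinary \emph{set} quantification inside the index structure. The starting observation is structural: a subset $C\subseteq\Sigma^*$ is a chain for $\preceq$ if and only if its elements are pairwise prefix-comparable, hence all lie on a single branch $\beta$ of the tree $\Sigma^*$ --- a finite word if $C$ is finite, an $\omega$-word if $C$ is infinite --- and $C$ is then determined by $\beta$ together with the set $L=\{|u|:u\in C\}\subseteq\omega$ of lengths of its members. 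Thus a chain is faithfully coded by a pair $(\beta,L)$ consisting of a branch and a set of positions, which is exactly the kind of data the index structure $\mathfrak S_{\omega}$ is designed to handle.

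Second, I would recast $\mathcal S_{\mathfrak M}$ inside a generalized power. Using the embedding $\mu(w)=w\#^{\omega}$ and the index structure $\mathfrak S_{\omega}=(S(\omega);\subseteq,\ll)$ (rather than $\mathfrak S_{<\omega}$, since infinite chains force infinite position sets and $\omega$-word branches), one checks --- exactly as in Proposition~\ref{prop:transfer} --- that the base relations $EqLength,\preceq,A_{R_i},A_{=}$ all admit reduction sequences with respect to $\mathfrak M_{\#}$ and $\mathfrak S_{\omega}$. The real work is to extend the inductive construction underlying Theorem~\ref{fvthm} from $FO$ formulas to $MSO^{ch}$ formulas. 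First-order word quantifiers $\exists x$ are handled by the standard composition step, which converts a quantifier over a power component into set quantifiers over the index structure. For a chain quantifier $\exists Y$ one uses the coding $(\beta,L)$: the branch $\beta$ is itself a power component (an $\omega$-word), so $\exists\beta$ is again an ordinary first-order power quantifier amenable to the composition step, while the length set $L$ is literally an element of $\mathfrak S_{\omega}$, so $\exists L$ becomes a genuine set quantifier in the index formula $G$. The membership predicate ``$x\in Y$'' unfolds to ``$x\preceq\beta\wedge|x|\in L$'', which again admits a reduction sequence. Intuitively, chains play here the role that the finite marking words $z_0,\dots,z_k,y$ played in the proof of Theorem~\ref{caractEES}, except that now they may mark arbitrary (even infinite) sets of positions along a branch, which is precisely the extra power that takes us from $FO$ to $MSO^{ch}$.

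Carrying this induction through yields, for the given $MSO^{ch}(\mathcal S_{\mathfrak M})$ sentence, an equivalent closed formula $G$ in the language of $\mathfrak S_{\omega}$ whose atoms are controlled by $\mathfrak M_{\#}$-formulas $\theta_i$, that is, a reduction sequence in which $G$ now ranges over the full $FO$ theory of $\mathfrak S_{\omega}$ (equivalently, the MSO theory of $(\omega;<)$). Evaluating the sentence then reduces to $FO(\mathfrak M_{\#})$ and $FO(\mathfrak S_{\omega})$; since $FO(\mathfrak M_{\#})$ and $FO(\mathfrak M)$ are interreducible and $FO(\mathfrak S_{\omega})$ is decidable by B\"uchi \cite{Buchi62}, the whole $MSO^{ch}$ theory of $\mathcal S_{\mathfrak M}$ reduces to $FO(\mathfrak M)$, as desired.

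The main obstacle is the passage from a \emph{first-order} composition theorem to a \emph{chain-monadic} one: I must prove a composition lemma stating that every $MSO^{ch}(\mathcal S_{\mathfrak M})$ formula possesses a reduction sequence in which chain quantifiers become index-set quantifiers, which is more than invoking Theorem~\ref{fvthm} verbatim. The delicate points are (i) showing that the decomposition $Y\mapsto(\beta,L)$ together with the constraints linking its components (uniqueness of $\beta$, the equivalence between finiteness of $C$ and finiteness of $L$, the degenerate empty and singleton chains, and first-order points viewed as trivial chains) is uniformly definable within the framework; and (ii) reconciling the finite-word domain of $\mathcal S_{\mathfrak M}$ with the $\omega$-word branches that infinite chains require --- that is, mixing the weak power (for the finite word arguments, where $f(i)\ne\#$ only finitely often) with the full power and $\mathfrak S_{\omega}$ (for branches and position sets), while checking that the $\#$-padding is respected so that all translations remain legitimate reduction sequences. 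Once this combined composition lemma is in place, the reduction to $FO(\mathfrak M)$ follows as above.
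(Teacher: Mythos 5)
Your proposal is correct in substance and rests on the same structural insight as the paper's proof---a chain is determined by a branch together with the set of lengths of its members, and everything ultimately reduces to $FO({\mathfrak S}_{\omega})$, decidable by B\"uchi---but the execution is genuinely different. The paper encodes a chain by a \emph{pair of $\omega$-words} $(u,v)$, where $v$ marks membership by agreeing with $u$ exactly at those positions whose corresponding prefix belongs to the chain; your length set $L$ is thus realized as the agreement set $\{i \ | \ u(i)=v(i)\}$, which lives inside the Feferman--Vaught framework as one of the sets $T_i$ (obtained from the ${\mathfrak M}$-formula $x=y$). With this trick chain quantification becomes ordinary first-order quantification over $\Sigma^\omega$: after a Thomas-style elimination of first-order variables, $MSO^{ch}({\mathcal S}_{\mathfrak M})$ interprets into $FO({\mathcal S}'_{\mathfrak M})$ for an explicit structure ${\mathcal S}'_{\mathfrak M}$ with domain $\Sigma^\omega$, and it only remains to exhibit reduction sequences for its finitely many base relations ($\approx$, $\subseteq'$, $EqLength'$, $\preceq'$, $A'_{R_i}$, $A'_{=}$) to see that it is a generalized power of ${\mathfrak M}$ with respect to ${\mathfrak S}_{\omega}$; Theorem~\ref{fvthm} is then invoked as a black box. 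You instead keep $L$ as a literal element of the index structure, and consequently must extend the Feferman--Vaught induction itself to a chain-MSO composition lemma in which $\exists Y$ splits into a power quantifier $\exists\beta$ and an index-set quantifier $\exists L$. That obligation, which you rightly flag as the main obstacle, is dischargeable---the parameters $L_j$ ride through the FV quantification step untouched, and your unfolding of $x\in Y$ as $x\preceq\beta \wedge |x|\in L$ admits a reduction sequence---so your route is sound, but it redoes machinery that the paper's encoding renders unnecessary. What each approach buys: yours yields a reusable, white-box composition theorem for chain logic over generalized powers; the paper's pair-of-$\omega$-words coding keeps the whole argument first order, requires no new composition lemma, and incidentally dissolves your worry about mixing weak and full powers, since finite words enter ${\mathcal S}'_{\mathfrak M}$ only as singleton chains coded by pairs of $\omega$-words, with no $\#$-padding at all.
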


Observe that  all predicates $\hat{A}_{R_i}$ can be defined in ${\mathcal S}_{\mathfrak M}$.

\begin{proof}
The proof consists in two main steps. The first one is to reduce the $MSO^{ch}$ theory of ${\mathcal S}_{\mathfrak M}$ to the $FO$ theory of some structure ${\mathcal S}'_{\mathfrak M}$ with domain $\Sigma^\omega$. The technique is an adaptation from \cite[Section 4]{Tho92}. The second step consists in proving that ${\mathcal S}'_{\mathfrak M}$ is a generalized power with respect to $\mathfrak M$ and ${\mathfrak S}_{\omega}$, which allows then to use Feferman-Vaught Theorem and the decidability of $FO({\mathfrak S}_{\omega})$ to conclude.

We first consider a variant of $MSO^{ch}$ of ${\mathcal S}_{\mathfrak M}$ where only second-order variables occur; this can be done by introducing the inclusion predicate $X_1 \subseteq X_2$, and replacing relations between elements by the corresponding relations between singleton sets.

Now every chain $X$ of elements of $\Sigma^*$ can be represented by a couple $(u,v)$ of elements of $\Sigma^\omega$ in the following way: 
\begin{enumerate}[$\bullet$]
\item $u$ corresponds to the ``direction" of the chain $X$, i.e. is such that all elements of $X$ are prefixes of $u$ (note that if $X$ is infinite then there exists a unique such $u$);
\item $v$ indicates which prefixes of $u$ belong to $X$, in the following way: for every integer $i$, we have $v[i]=u[i]$ if and only if the prefix of $u$ of length $i$ belongs to $X$.
\end{enumerate} 
According to this definition, any couple $(u,v)$ of elements of $\Sigma^\omega$ represents a single chain which will be denoted by $ch(u,v)$. 

The previous encoding allows to reduce the $MSO^{ch}$ theory of ${\mathcal S}_{\mathfrak M}$ to the FO theory of the structure:
$${\mathcal S}'_{\mathfrak M}=(\Sigma^\omega;\approx,\subseteq',EqLength',\preceq',{A}'_{R_1},\dots, A'_{R_n})$$
where
\begin{enumerate}[$\bullet$]
\item $\approx(u_1,v_1,u_2,v_2)$ holds if and only if $ch(u_1,v_1) = ch(u_2,v_2)$;
\item $\subseteq'(u_1,v_1,u_2,v_2)$ holds if and only if $ch(u_1,v_1) \subseteq ch(u_2,v_2)$;
\item $EqLength'(u_1,v_1,u_2,v_2)$ holds if and only if there exist two words $u,u' \in \Sigma^*$ such that $ch(u_1,v_1)=\{u\}$, $ch(u_2,v_2)=\{u'\}$, and ${\mathcal S}_{\mathfrak M} \models EqLength(u,u')$ ;
\item $\preceq'(u_1,v_1,u_2,v_2)$ holds if and only if there exist two words $u,u' \in \Sigma^*$ such that $ch(u_1,v_1)=\{u\}$, $ch(u_2,v_2)=\{u'\}$, and ${\mathcal S}_{\mathfrak M} \models u \preceq u'$;
\item For every $i$, if $R_i$ is a $n-$ary relation, then $A'_{R_i}(u_1,v_1,u_2,v_2,\dots,u_n,v_n)$ holds if and only if there exist words $w_1,\dots,w_n \in \Sigma^*$ such that $ch(u_j,v_j)=\{w_j\}$ for every $j$,  and ${\mathcal S}_{\mathfrak M} \models A_{R_i}(w_1,\dots,w_n)$.
\item $A'_{=}(u_1,v_1,u_2,v_2)$ holds if and only if there exist two words $u,u' \in \Sigma^*$ such that $ch(u_1,v_1)=\{u\}$, $ch(u_2,v_2)=\{u'\}$, and ${\mathcal S}_{\mathfrak M} \models A_=(u,u')$;
\end{enumerate}

We shall prove that ${\mathcal S}'_{\mathfrak M}$ is a generalized power of $\mathfrak M$ with respect to ${\mathfrak S}_{\omega}$. To this aim, let us prove that all base relations of ${\mathcal S}'_{\mathfrak M}$ admit a reduction sequence with respect to $\mathfrak M$ and ${\mathfrak S}_{\omega}$:
\begin{enumerate}[$\bullet$]
\item $\subseteq'(u_1,v_1,u_2,v_2)$ holds if and only if
$${\mathfrak S}_{\omega} \models T_1 \subseteq T_2 \wedge \exists X (IS(X,T_3) \wedge T_1 \subseteq X)$$
where 
$$T_1=\{i \in \omega \  | \ u_1(i)=v_1(i) \},$$
$$T_2=\{i \in \omega \  | \ u_2(i)=v_2(i) \},$$
$$T_3=\{i \in \omega \  | \ u_1(i)=u_2(i) \},$$
and $IS(X,T_3)$ is a formula which expresses that $X$ is the greatest initial segment of $\omega$ which is contained in $T_3$ (this is expressible in ${\mathfrak S}_{\omega}$);
\item $\approx(u_1,v_1,u_2,v_2)$ holds if and only if both $\subseteq'(u_1,v_1,u_2,v_2)$ and $\subseteq'(u_2,v_2,u_1,v_1)$ hold, from which we can deduce a reduction sequence for the relation $\approx$;
\item $EqLength'(u_1,v_1,u_2,v_2)$ holds if and only if $T_1$ and $T_2$ are singleton sets and $T_1=T_2$, i.e. if 
$${\mathfrak S}_{\omega} \models \exists Y (T_1 \ll Y) \wedge T_1=T_2$$
(with the same notations as above);
\item $\preceq'(u_1,v_1,u_2,v_2)$ holds if and only if 
$${\mathfrak S}_{\omega} \models \exists Y (T_1 \ll Y) \wedge (T_1=T_2 \vee T_1 \ll T_2) \wedge \exists X (IS(X,T_3) \wedge T_1 \subseteq X);$$
\item for every $i$, if $R_i$ is a $n-$ary relation, then $A'_{R_i}(u_1,v_1,u_2,v_2,\dots,u_n,v_n)$ holds if and only if all sets $U_j=\{i \in \omega \  | \ u_j(i)=v_j(i) \}$, $j=1,2,\dots,m$, are singleton sets and are equal, and are included in the set 
$$U=\{i \in \omega \  | \ {\mathfrak M} \models R_j(u_1(i),u_2(i),\dots,u_n(i)) \}.$$
These properties can be expressed in ${\mathfrak S}_{\omega}$.

\item the case of $A'_{=}(u_1,v_1,u_2,v_2)$ is similar to the previous case, with $n=2$ and $=$ in place of $R_j$;

\end{enumerate}

We have proved that ${\mathcal S}'_{\mathfrak M}$ is a generalized power of $\mathfrak M$ with respect to ${\mathfrak S}_{\omega}$. By Theorem \ref{fvthm}, $FO({\mathcal S}'_{\mathfrak M})$ reduces to the FO theories of ${\mathfrak M}$ and ${\mathfrak S}_{\omega}$. Now $FO({\mathfrak S}_{\omega})$ is decidable by B\"uchi  \cite{Buchi62}, thus $FO({\mathcal S}'_{\mathfrak M})$ reduces to $FO({\mathfrak M})$.\end{proof}

\subsection{Ordinal addition and $(\omega;+)$-recognizability}\label{subsection:ordinal}

We shall focus now on the case ${\mathfrak M}=(\omega;+)$, where $+$ denotes the graph of addition. In this case we present another logical characterization of ${\mathfrak M}$-recognizable relations in terms of ordinal theories. This is essentially a reformulation of known results.

In the sequel we consider structures of the form $(\alpha;+)$ where $\alpha$ is an ordinal. The domain is the set of ordinals less than $\alpha$, and $+$ is interpreted as the graph of ordinal addition restricted to the domain.

Feferman and Vaught prove in \cite{FV59} that for every ordinal $\gamma$ the 
structure $(\omega^\gamma;+)$ is isomorphic to some generalized weak power of $(\omega;+)$ with respect to
$(S^+(\gamma); \subseteq,  \ll)$\footnote{As a corollary, the FO theory of $(\omega^\gamma;+)$ reduces to the FO theory of $(\omega;+)$ (Presburger Arithmetic, which is decidable \cite{Pre29}) and the weak MSO theory of $(\gamma,<)$. The latter was proved to be decidable by B\"uchi \cite{Buchi65} a few years after Feferman-Vaught' work, which implies the decidability of the FO theory of $(\omega^\gamma;+)$. }.

In particular for $\gamma=\omega$ their result, combined with B\"uchi's result, implies that via some encoding all relations definable in $(\omega^\omega;+)$ are $(\omega;+)$-recognizable, and that the theory of $(\omega^\omega;+)$ is decidable.

Let us be more specific. We first recall some useful results on ordinal arithmetic; all of them can be found e.g. in Sierpinski's book \cite[chap.XIV]{Sie65}

\begin{prop} [Cantor normal form for ordinals]\label{cantor}
Every ordinal $\alpha>0$ can be written uniquely as
\[ 
\alpha= \omega^{\alpha_1} a_1 
+ \cdots + \omega^{\alpha_k} a_k
\]
where $\alpha_1,\alpha_{2},\dots,\alpha_k$ is a decreasing sequence of ordinals, and $0< a_i < \omega$.
\end{prop}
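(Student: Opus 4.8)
The plan is to prove both existence and uniqueness by transfinite induction on $\alpha$, relying on three standard facts about ordinal arithmetic: that the map $\beta \mapsto \omega^\beta$ is strictly increasing and continuous; the ordinal division algorithm (for ordinals $\alpha$ and $\delta>0$ there are unique $q,r$ with $\alpha=\delta\cdot q+r$ and $r<\delta$); and the additive indecomposability of the ordinals $\omega^\delta$ (if $\beta,\gamma<\omega^\delta$ then $\beta+\gamma<\omega^\delta$). All three can be taken from Sierpinski's treatment.

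For \emph{existence}, given $\alpha>0$ I would first isolate the leading exponent. Consider the set $S=\{\beta \mid \omega^\beta\le\alpha\}$. It is nonempty since $\omega^0=1\le\alpha$, and it is downward closed by monotonicity of exponentiation; it is also bounded, because $\omega^{\alpha+1}=\omega^\alpha\cdot\omega>\alpha$. Hence $S$ is itself an ordinal, and continuity of $\beta\mapsto\omega^\beta$ at limits rules out $S$ being a limit ordinal (such a limit would belong to $S$ by continuity, a contradiction), so $S$ has a greatest element $\alpha_1$. Applying the division algorithm with $\delta=\omega^{\alpha_1}$ gives $\alpha=\omega^{\alpha_1}a_1+\rho$ with $\rho<\omega^{\alpha_1}$; here $a_1\ge 1$ since $\omega^{\alpha_1}\le\alpha$, and $a_1<\omega$, because $a_1\ge\omega$ would force $\alpha\ge\omega^{\alpha_1+1}$, contradicting the maximality of $\alpha_1$. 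Since $\rho<\omega^{\alpha_1}\le\alpha$, the induction hypothesis yields a Cantor normal form for $\rho$ whose leading exponent is strictly below $\alpha_1$ (again because $\rho<\omega^{\alpha_1}$ forces its leading power to be below $\omega^{\alpha_1}$); prepending the term $\omega^{\alpha_1}a_1$ then produces a representation of $\alpha$ with strictly decreasing exponents.

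For \emph{uniqueness}, the crucial ingredient is an absorption lemma: if $\alpha_1>\alpha_2>\cdots>\alpha_k$ and $0<a_i<\omega$, then the tail $\omega^{\alpha_2}a_2+\cdots+\omega^{\alpha_k}a_k$ is strictly less than $\omega^{\alpha_1}$. This holds because each term $\omega^{\alpha_i}a_i$ with $\alpha_i<\alpha_1$ satisfies $\omega^{\alpha_i}a_i<\omega^{\alpha_i+1}\le\omega^{\alpha_1}$, and additive indecomposability of $\omega^{\alpha_1}$ keeps the finite sum below $\omega^{\alpha_1}$. Consequently, for \emph{any} Cantor normal form of $\alpha$ the leading exponent must equal $\max\{\beta\mid\omega^\beta\le\alpha\}$, since the lemma gives $\omega^{\alpha_1}\le\alpha<\omega^{\alpha_1}(a_1+1)\le\omega^{\alpha_1+1}$; then the division algorithm forces $a_1$ and the remainder $\rho$ to be the unique quotient and remainder of $\alpha$ modulo $\omega^{\alpha_1}$. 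As $\rho<\alpha$, applying the induction hypothesis to $\rho$ determines all remaining terms uniquely.

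I would expect the main obstacle to be the absorption lemma together with its use in uniqueness: it is precisely the step that turns the \emph{local} data (leading exponent and coefficient) into quantities canonically determined by $\alpha$ alone, and it is where the special role of the powers $\omega^\delta$ — their additive indecomposability — is indispensable. By contrast, the existence half is comparatively routine once the three cited facts about ordinal arithmetic are available.
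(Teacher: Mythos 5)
Your proof is correct: the existence argument (isolating the leading exponent as the maximum of $\{\beta \mid \omega^\beta \le \alpha\}$, then dividing by $\omega^{\alpha_1}$ and inducting on the remainder) and the uniqueness argument via the absorption lemma and uniqueness of ordinal division are both sound and complete. Note, however, that the paper does not prove this proposition at all — it is recalled as a classical fact with a pointer to Sierpinski's book (chap.~XIV) — and your argument is essentially the standard textbook proof found there, so there is nothing in the paper to contrast it with.
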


The following proposition relates the Cantor normal form of the ordinal $\alpha+\beta$ to the one of $\alpha$ and $\beta$.

\begin{prop}\label{addition}
Let $\alpha= \omega^{\alpha_1} a_1 
+ \cdots + \omega^{\alpha_k} a_k$ and 
$ 
\beta= \omega^{\beta_1} b_1 
+ \cdots+  \omega^{\beta_l} b_l$ be two ordinals $>0$ in Cantor normal form. 
\begin{enumerate}[$\bullet$]
\item If $\alpha_1<\beta_1$ then $\alpha+\beta=\beta$
\item If $\alpha_1 \geq \beta_1$ and if $\alpha_j=\beta_1$ for some $j$, then 
\[ \alpha+\beta= 
(\omega^{\alpha_1} a_1 + \cdots + \omega^{\alpha_{j-1}} a_{j-1}) 
+ \omega^{\alpha_{j}}(a_j+b_1) 
+ (\omega^{\beta_2} b_2 + \cdots+  \omega^{\beta_l} b_l) 
\]
\item If $\alpha_1 \geq \beta_1$ and if $\alpha_j \ne \beta_1$ 
for every $j$, then 
\[ \alpha+\beta= 
(\omega^{\alpha_1} a_1 + \cdots + \omega^{\alpha_{m}} a_{m}) 
+ (\omega^{\beta_1} b_1 + \cdots+  \omega^{\beta_l} b_l) 
\]
where $m$ is the greatest index for which $\alpha_m > \beta_1$.
\end{enumerate}

\end{prop}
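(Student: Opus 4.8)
The plan is to reduce everything to the single \emph{absorption property} of the ordinals $\omega^\delta$, and then obtain the three cases by routine manipulation using associativity of ordinal addition and the left-distributive law $\omega^\gamma(a+b)=\omega^\gamma a+\omega^\gamma b$. The key lemma I would establish (or cite from \cite{Sie65}) is that $\omega^\delta$ is \emph{additively indecomposable}: for every $\xi<\omega^\delta$ one has $\xi+\omega^\delta=\omega^\delta$. This is proved by transfinite induction on $\delta$, the point being that any $\xi<\omega^\delta$ is bounded by some $\omega^{\delta'}n$ with $\delta'<\delta$, which is absorbed. From it I would derive the two facts actually used below: first, if $\gamma<\delta$ and $0<a<\omega$ then $\omega^\gamma a<\omega^\delta$ (since $\omega^\gamma a<\omega^{\gamma+1}\le\omega^\delta$); and second, for any $\xi<\omega^{\beta_1}$ and $0<b_1<\omega$,
$$\xi+\omega^{\beta_1}b_1=\omega^{\beta_1}b_1,$$
which follows by writing $\omega^{\beta_1}b_1=\omega^{\beta_1}+\omega^{\beta_1}(b_1-1)$ and applying indecomposability once.

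With these tools the three cases are short. First I note that a sum whose exponents all satisfy $\gamma_i<\delta$ is itself $<\omega^\delta$, by repeated use of indecomposability; call such a partial sum a \emph{tail}. In \textbf{case 1} ($\alpha_1<\beta_1$) the whole of $\alpha$ is a tail relative to $\beta_1$, so $\alpha<\omega^{\beta_1}$ and the displayed absorption gives $\alpha+\omega^{\beta_1}b_1=\omega^{\beta_1}b_1$; associativity then yields $\alpha+\beta=\beta$. In \textbf{case 3} ($\alpha_1\ge\beta_1$, no $\alpha_j=\beta_1$, hence $\alpha_1>\beta_1$) let $m$ be the largest index with $\alpha_m>\beta_1$; the remaining part of $\alpha$ is a tail $\tau<\omega^{\beta_1}$, and by associativity $\alpha+\beta=(\omega^{\alpha_1}a_1+\cdots+\omega^{\alpha_m}a_m)+(\tau+\omega^{\beta_1}b_1)+(\omega^{\beta_2}b_2+\cdots)$, where $\tau+\omega^{\beta_1}b_1=\omega^{\beta_1}b_1$ absorbs $\tau$, giving exactly the claimed expression. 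In \textbf{case 2} ($\alpha_j=\beta_1$ for the unique such $j$) the same splitting leaves the tail after the $j$-th term $<\omega^{\alpha_j}$, which is again absorbed; the two middle terms then merge via $\omega^{\alpha_j}a_j+\omega^{\beta_1}b_1=\omega^{\alpha_j}(a_j+b_1)$ using $\alpha_j=\beta_1$ and distributivity.

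In each case I would finally check that the result is genuinely in Cantor normal form, i.e.\ that the exponent sequence is strictly decreasing and the coefficients lie in $(0,\omega)$. This is immediate: in case 3 the juncture satisfies $\alpha_m>\beta_1$ strictly, in case 2 we have $\alpha_{j-1}>\alpha_j=\beta_1>\beta_2$ with $0<a_j+b_1<\omega$, and the two segments are individually decreasing by hypothesis. The only substantive step is the absorption lemma; once it is in hand the rest is bookkeeping. I therefore expect the main obstacle to be stating and proving additive indecomposability cleanly (or locating the precise form needed in \cite{Sie65}), together with being careful that ordinal addition is associative but \emph{not} commutative, so the order in which the groupings above are performed must be respected.
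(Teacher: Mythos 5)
Your proof is correct, but there is nothing in the paper to compare it against: the paper does not prove this proposition. It is stated among ``useful results on ordinal arithmetic'' explicitly recalled from Sierpinski's book \cite[chap.XIV]{Sie65}, and the paper immediately moves on to use it (Example \ref{ex:additionordinale} and the $(\omega;+)$-automaton for ordinal addition). What you have written is essentially the standard textbook derivation, and it is sound: everything rests on additive indecomposability of $\omega^\delta$ (that $\xi+\omega^\delta=\omega^\delta$ whenever $\xi<\omega^\delta$), from which you correctly extract the two working facts --- closure of $\omega^\delta$ under addition (your ``tail'' bound, via $\xi+\eta<\xi+\omega^\delta=\omega^\delta$) and the absorption $\xi+\omega^{\beta_1}b_1=\omega^{\beta_1}b_1$ --- and then the three cases follow by associativity plus the left-distributive law $\omega^\gamma(a+b)=\omega^\gamma a+\omega^\gamma b$, which is indeed the valid direction of distributivity for ordinals. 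Your final check that the resulting expressions are again in Cantor normal form (strictly decreasing exponents at the junctures $\alpha_m>\beta_1$ and $\alpha_{j-1}>\alpha_j=\beta_1>\beta_2$, finite nonzero coefficients) is the right thing to verify, and the degenerate situations (empty head $j=1$, empty tail $j=k$ or $m=k$, single-term $\beta$) cause no trouble in your argument. The only piece you leave as a sketch is the transfinite induction establishing indecomposability itself; that is legitimately citable from \cite{Sie65}, and your outline of it (bounding $\xi<\omega^\delta$ by $\omega^{\delta'}n$ with $\delta'<\delta$) is the standard one. In short: the paper's ``approach'' is citation, yours is a self-contained proof, and yours works.
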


Consider now the function $f: \omega^\omega  \to \omega^*$ which maps every ordinal $\alpha<\omega^\omega$, written in Cantor normal form as $\alpha=\sum_{i=m}^{i=0}\omega^i a_i$ with $a_i<\omega$ and $a_m \ne 0$, to the word $c(\alpha)=a_0 \dots a_m$ over the alphabet $\omega$.  Given $n$ ordinals $\alpha_1,\dots,\alpha_n$, we define $c(\alpha_1,\dots,\alpha_n)$ as $\langle c(\alpha_1),\dots,c(\alpha_n) \rangle$, where we choose $0$ as the padding symbol $\#$.

\begin{exa}\label{ex:additionordinale}
Consider the ordinals 
\[ \alpha= \omega^{6} \cdot 5+ \omega^{4} \cdot 4 + \omega^3 \cdot 3 + \omega^1 \cdot 2 + \omega^0 \cdot 11,\] and 
\[\beta=  \omega^3 \cdot 17 + \omega^2 \cdot 6 + \omega^1 \cdot 2.\] 
Then by Proposition \ref{addition} (second case), the ordinal $\gamma=\alpha+\beta$ equals 
\[\gamma= (\omega^{6} \cdot 5+ \omega^{4} \cdot 4) + \omega^3 \cdot (3+17) + (\omega^2 \cdot 6 + \omega^1 \cdot 2).\]
We have 
\[
c(\alpha,\beta,\gamma)=
%%%
\left( 
\begin{array}{c} 
11 \\ 
0 \\ 
0 \\ 
\end{array} 
\right)
%%%
\left( 
\begin{array}{c} 
2 \\ 
2 \\ 
2 \\ 
\end{array} 
\right)
%%%
%%%
\left( 
\begin{array}{c} 
0 \\ 
6 \\ 
6 \\ 
\end{array} 
\right)
%%%
\left( 
\begin{array}{c} 
3 \\ 
17 \\ 
20 \\ 
\end{array} 
\right)
%%%
\left( 
\begin{array}{c} 
4 \\ 
0 \\ 
4 \\ 
\end{array} 
\right)
%%%
\left( 
\begin{array}{c} 
0 \\ 
0 \\ 
0 \\ 
\end{array} 
\right)
%%%%%%%% % %
\left( 
\begin{array}{c} 
5 \\ 
0 \\ 
5 \\ 
\end{array} 
\right).
\]

\end{exa}

\

Proposition \ref{addition} is the key argument in Feferman-Vaught' proof that $(\omega^\gamma;+)$ is isomorphic to some generalized weak power of $(\omega;+)$ with respect to
$(S^+(\gamma); \subseteq,  \ll)$. Let us reformulate their ideas in terms of $(\omega;+)$-automata.

\begin{prop}
The image by $c$ of the graph of addition for ordinals $<\omega^\omega$ is $(\omega;+)$-recognizable.
\end{prop}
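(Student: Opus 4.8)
The plan is to turn the case analysis of Proposition~\ref{addition} into a single two-state $(\omega;+)$-automaton that reads the three tracks $c(\alpha),c(\beta),c(\gamma)$ synchronously, starting from the coefficient of $\omega^0$ and moving up. So first I would reformulate the addition rule position by position. Write $a_i,b_i,g_i$ for the coefficients of $\omega^i$ in $\alpha,\beta,\gamma$, and let $e$ be the leading exponent of $\beta$, i.e. the largest index with $b_e\neq 0$. A direct reading of the three cases of Proposition~\ref{addition} shows that $\gamma=\alpha+\beta$ if and only if $g_i=b_i$ for $i<e$, $\ g_e=a_e+b_e$, and $g_i=a_i$ together with $b_i=0$ for $i>e$; note that there is no carry since coefficients of distinct powers of $\omega$ do not interact. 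The degenerate case $\beta=0$ is subsumed by reading this as $g_i=a_i$ and $b_i=0$ for all $i$.

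Then I would take $\mathcal{A}$ to have states $q_0,q_1$, with $I=\{q_0,q_1\}$ and $T=\{q_1\}$, and three transition rules: a loop $(q_0,\ y=z,\ q_0)$ for the low region $i<e$; a switching transition $(q_0,\ (x+y=z)\wedge y\neq 0,\ q_1)$ meant to fire exactly at $i=e$; and a loop $(q_1,\ (x=z)\wedge y=0,\ q_1)$ for the high region $i>e$. Here $y\neq 0$ and $y=0$ abbreviate $\neg(y+y=y)$ and $y+y=y$, which are the $(\omega;+)$-definitions of $P_{\#}$ since the padding symbol has been chosen to be $0$; equality is available in FO. Both $q_0,q_1$ are initial so that the $\beta=0$ case (and the empty triple $c(0,0,0)$) is handled by a run that starts and stays in $q_1$.

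For correctness I would argue both inclusions. If $\gamma=\alpha+\beta$, the run that loops in $q_0$ over positions $0,\dots,e-1$, fires the switch at $e$, and then loops in $q_1$ is successful, by the reformulated rule; and when $\beta=0$ the run remaining in $q_1$ works. Conversely, any successful run must pass from $q_0$ to $q_1$ exactly once (as $q_0\notin T$), and the per-position guards then certify precisely the three families of equations above, whence $\gamma=\alpha+\beta$. The edge cases $\alpha=0$, $\beta=0$ and the empty word should be verified directly against this run, using $e_\gamma=\max(e_\alpha,e_\beta)$ to see that position $e$ always lies inside the (padded) common length.

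The only delicate point, and the main obstacle, is to force the nondeterministic switch to land exactly at the leading exponent $e$ of $\beta$: switching too late is blocked by the guard $y\neq 0$ of the switching transition, since $b_i=0$ for $i>e$; switching too early is ruled out along a successful run because some later position still carries $b_i\neq 0$ and violates the $y=0$ guard of the $q_1$-loop; and never switching traps the run in the non-terminal state $q_0$. A secondary remark concerns the padding convention: since $\#$ is identified with $0\in\omega$, appending zero-columns to an accepted triple yields another accepted triple, so $\mathcal{A}$ recognizes the image of the graph of addition up to trailing zeros — which is harmless, trailing-zero stripping being exactly the (injective) decoding of the code $c$.
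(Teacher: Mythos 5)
Your proof is correct and takes essentially the same route as the paper's: the identical two-state automaton with guards $z=y$ (loop at $q_0$), $y\neq 0\wedge z=x+y$ (the switch), and $y=0\wedge z=x$ (loop at $q_1$), both states initial and only $q_1$ final, with the nondeterministic switch forced to land on the leading exponent of $\beta$. Your extra material---the explicit position-by-position reformulation of Proposition~\ref{addition}, the two-inclusion correctness argument, and the remark that the convention $\#=0$ makes acceptance invariant under trailing zero-columns---only fills in details (including one genuine subtlety) that the paper's own proof leaves implicit.
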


\begin{proof} \ A convenient $(\omega;+)$-automaton which recognizes the language 
\[X~=~\{ c(\alpha,\beta,\gamma) \ | \ \alpha,\beta,\gamma < \omega^\omega, \ \alpha+\beta=\gamma \}\]
 is pictured in Figure
  \ref{fig:autoaddition}, where 
\begin{enumerate}[$\bullet$]
\item $\varphi_1(x,y,z):  z=y$
\item $\varphi_2(x,y,z): \ y \ne 0 \wedge z=x+y$
\item $\varphi_3(x,y,z): \ y = 0 \wedge z=x$
\end{enumerate}

\begin{figure}[htbp]
  \begin{center}
    \gasset{Nw=6,Nh=6,loopdiam=7}
    \begin{picture}(35,20)(0,-5)         
    \node[Nmarks=i,iangle=270](0)(10,5){$q_0$}    
    \node[Nmarks=fi,fangle=0,iangle=270](1)(25,5){$q_1$}
    \drawloop(0){$\varphi_1$} 
    \drawedge(0,1){$\varphi_2$}
    \drawloop(1){$\varphi_3$}
    \end{picture}
    \caption{An $(\omega;+)$-automaton for ordinal addition}
    \label{fig:autoaddition}
  \end{center}
\end{figure}
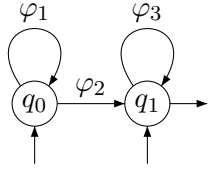
  
This automaton has two states $q_0,q_1$. Both are initial states, and only $q_1$ is final. Using $q_1$ as the initial state allows to deal with the case $\beta=0$; in this case we have $\gamma=\alpha$, which is checked by the transition labelled by $\varphi_3$.

Using $q_0$ as an initial state allows to deal with the case $\beta \ne 0$. In this case let $\omega^{\beta_1}$ denote the greatest power of $\omega$ which appears in the Cantor normal form of $\beta$. The transition labelled by $\varphi_1$ allows to deal with coefficients of powers $\omega^i$ where $i < \beta_1$; for these powers the corresponding coefficients of $\beta$ and $\alpha+\beta$ must be equal. The transition labelled by $\varphi_2$ corresponds to the power $\omega^{\beta_1}$. Then for all powers $\omega^j$ such that $j > \beta_1$, the corresponding coefficients of $\alpha$ and $\alpha+\beta$ coincide; this corresponds to the transition labelled by $\varphi_3$.
\end{proof}

%%%

We can provide now a characterization of ${\mathfrak M}$-recognizable relations for the case ${\mathfrak M}=(\omega;+)$.

\begin{prop}
For every $n\geq 1$, and every $n-$ary relation $R$ over $\omega^\omega$, the relation $R$ is definable in $(\omega^\omega;+)$ if and only if  $c(R)$ is $(\omega;+)$-recognizable.
\end{prop}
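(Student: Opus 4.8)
The plan is to treat $c$ as an isomorphism from $(\omega^\omega;+)$ onto the substructure carried by $W:=c(\omega^\omega)\subseteq\omega^*$, the set of words over $\omega$ that are empty or end with a nonzero letter, and then to play the two directions of the correspondence between $(\omega;+)$-recognizability, $(\omega;+)$-automaticity and $S_{\mathfrak M}$-definability against each other. Throughout, $\mathfrak M$ denotes $(\omega;+)$.

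For the direction \emph{definable $\Rightarrow$ recognizable}, I would first observe that $(\omega^\omega;+)$ is $(\omega;+)$-automatic \emph{via} $c$. The map $c$ is injective; its image $W$ is $(\omega;+)$-recognizable by a two-state automaton checking that the last letter is $\neq 0$; and $c(+)$ is $(\omega;+)$-recognizable by the preceding proposition on ordinal addition. Theorem~\ref{thm:hodgson2}$(1)$ then yields at once that the image under $c$ of every relation definable in $(\omega^\omega;+)$ is $(\omega;+)$-recognizable, which is exactly the claim in this direction.

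For the converse \emph{recognizable $\Rightarrow$ definable}, I would route through $S_{\mathfrak M}$. By Theorem~\ref{caractEES}$(1)$, $c(R)$ is definable in $S_{\mathfrak M}=(\omega^*;EqLength,\preceq,A_+,A_=)$ by some formula $\phi$. Since $c(R)\subseteq W^n$ and $W$ is itself definable in $S_{\mathfrak M}$ (indeed $w\in W$ iff $\neg A_+(w,w,w)$, because $A_+(w,w,w)$ asserts that the last letter $a$ of $w$ satisfies $a+a=a$, i.e. $a=0$), I may relativize $\phi$ to $W$. The heart of the argument is then to show that, transported back through the isomorphism $c$, each of the four base relations of $S_{\mathfrak M}$ restricted to $W$ is first-order definable in $(\omega^\omega;+)$; substituting these definitions into the relativized $\phi$ produces a formula defining $R$ in $(\omega^\omega;+)$, since $(\alpha_1,\dots,\alpha_n)\in R$ iff $(c\alpha_1,\dots,c\alpha_n)\in c(R)$.

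The definability toolkit inside $(\omega^\omega;+)$ would be built up in order: $0$ as the unique idempotent of $+$; the order $\alpha\le\beta\iff\exists\gamma\,(\alpha+\gamma=\beta)$; the additively indecomposable ordinals $P=\{\omega^m:m<\omega\}$ \emph{via} $\gamma>0\wedge\forall\alpha\forall\beta\,(\alpha<\gamma\wedge\beta<\gamma\to\alpha+\beta<\gamma)$, with their successor operation and the leading power $\lambda(\alpha)$ (the largest element of $P$ that is $\le\alpha$). The key lemma is a first-order definition of ``$\gamma$ is a multiple of $p$'' for $p\in P$, for which I would use $\gamma=0\vee\forall\beta\,(\beta<\gamma\to\beta+p\le\gamma)$, checking both implications against the Cantor normal form by means of Proposition~\ref{addition}. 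From these, $EqLength$ becomes equality of leading powers, $\preceq$ becomes ``$\lambda(\alpha)\le\lambda(\beta)$ and $\beta=\gamma+\alpha$ for some $\gamma$ that is a multiple of the successor power of $\lambda(\alpha)$'', $A_=$ becomes equality of leading powers together with equality of the leading terms (largest multiple of $\lambda(\alpha)$ below $\alpha$), and $A_+$ becomes equality of the three leading powers together with additivity of the three leading terms, using $\omega^m a+\omega^m b=\omega^m(a+b)$ for finite $a,b$. The main obstacle is precisely this last block: giving correct first-order definitions of divisibility by $\omega^m$ and of the leading-coefficient relations and verifying them via Proposition~\ref{addition}; the isomorphism bookkeeping, the relativization to $W$, and the whole first direction are then routine.
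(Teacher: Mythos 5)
Your first direction is correct and is exactly the paper's argument: $c$ makes $(\omega^\omega;+)$ an $(\omega;+)$-automatic structure, and Theorem~\ref{thm:hodgson2}(1) finishes. Your definability toolkit inside $(\omega^\omega;+)$ (powers of $\omega$, multiples of a given power, leading terms and coefficients) is also sound, and is essentially the paper's list of auxiliary predicates ($Pow$, $Mult_i$, $Term$, $AddCoef$). The genuine gap is the step ``I may relativize $\phi$ to $W$''. Definability of $W$ in $S_{\mathfrak M}$ licenses the \emph{opposite} move: a formula over the induced substructure on $W$ can be simulated in $S_{\mathfrak M}$ by relativizing its quantifiers to $W$. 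It does not give what you need, namely that the $W$-relativization of $\phi$ still defines $c(R)$. In general it does not: if $\psi$ defines $c(R)$ in $S_{\mathfrak M}$, then so does $\psi(w)\wedge\exists u\,(w\preceq u\wedge A_+(u,u,u))$ (take $u=w0$), yet the $W$-relativization of the latter defines $\emptyset$, since $A_+(u,u,u)$ fails for every $u\in W$. Worse, the particular $\phi$ produced by Theorem~\ref{caractEES} is of exactly this vulnerable kind: it quantifies over run-encoding words $z_0,\dots,z_k,y$ and, crucially, over prefixes of the input words in order to address positions, and a prefix of $c(\alpha)$ ends in $0$ whenever the corresponding coefficient in the Cantor normal form of $\alpha$ is $0$, hence lies outside $W$. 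So after relativization and substitution of your (correct) definitions of the four base relations restricted to $W$, you obtain a formula that defines \emph{some} relation in $(\omega^\omega;+)$, but you have given no argument that this relation is $R$.

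The paper avoids this problem by not routing through $S_{\mathfrak M}$ at all: it redoes B\"uchi's encoding directly in $(\omega^\omega;+)$, coding a successful run of the automaton on $c(\alpha_1,\dots,\alpha_n)$ by a single ordinal $\gamma=\omega^m j_m+\dots+\omega^0 j_0$ and using predicates ($Cod_i$, $S_\psi$, $Term$, \dots) to say that consecutive coefficients of $\gamma$ follow transitions of the automaton whose guards are satisfied by the corresponding coefficients of the $\alpha_i$'s. Your route could be repaired, but not by relativization: you would need to interpret the \emph{full} structure $S_{\mathfrak M}$, including words with trailing zeros, inside $(\omega^\omega;+)$ --- for instance, represent an arbitrary word $w$ by the pair $\big(c^{-1}(\mathrm{trim}(w)),\,\omega^{|w|-1}\big)$, where $\mathrm{trim}$ deletes trailing zeros; then $EqLength$ becomes equality of second coordinates, $\preceq$ becomes a congruence condition modulo $\pi_1\omega$ (using your multiples lemma), and $A_+$, $A_=$ require extracting the coefficient of an \emph{arbitrary} prescribed power of $\omega$, not just the leading one. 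That extraction is precisely the paper's $Term$/$S_\psi$ machinery checked against Proposition~\ref{addition}, so once the gap is filled honestly, the detour through Theorem~\ref{caractEES} saves no work over the paper's direct encoding.
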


\begin{proof} (sketch)
The ``only if" part comes from the fact that the range of $c$, as well as the graph of ordinal addition, are $(\omega;+)$-recognizable. Thus $(\omega^\omega;+)$ is $(\omega;+)$-automatic, and the result follows from Theorem \ref{thm:hodgson2}.

For the converse one can use again B\"uchi's encoding technique as in Theorem \ref{caractEES}. Assume that $c(R)$ is $(\omega;+)$-recognizable by some $(\omega;+)$-automaton $\mathcal A$ whose set of states is $Q=\{q_0,q_1,\dots,q_m\}$. We can define $R$ in $(\omega^\omega;+)$ by a formula $\varphi(\alpha_1,\dots,\alpha_n)$ which expresses the existence of a successful path of $\mathcal A$, say $(q_{j_0},q_{j_2},\dots,q_{j_{m}})$, labelled by $c(\alpha_1,\dots,\alpha_n)$. The formula encodes the path with an ordinal of the form $\gamma=\omega^m j_m + \omega^{m-1} j_{m-1} + \dots +\omega^0 j_0$. 

We need to define the following auxiliary predicates (we explain briefly how to define them in $(\omega^\omega;+)$):

\begin{enumerate}[$\bullet$]
\item $\alpha < \beta$ (we have $\alpha<\beta$ if and only if there exists some non-null ordinal $\gamma$ such that $\beta=\alpha+\gamma$);
\item the function $(x_1,\dots,x_n) \mapsto \max(x_1,\dots,x_n)$;
\item ``to be a limit ordinal less than $\omega^\omega$" (these are non-null ordinals which have no predecessor with respect to $<$);
\item $Pow(x)$ which holds iff $x$ is a power of $\omega$ less than $\omega^\omega$ (which holds iff $x$ is a limit ordinal and there do not exist limit ordinals $\beta,\gamma$ such that $x=\beta+\gamma$ and $\gamma \leq \beta$);
\item For every $i<\omega$, the relation $Mult_i(x)$ which holds iff $x$ an ordinal of the form $\omega^k \cdot i$ (easily definable with the predicate $Pow(x)$);
\item the function $x \mapsto x \omega$ (for $x \ne 0$, the ordinal $x\omega$ is the least power of $\omega$ greater than $x$);
\item $App(x,y)$ which holds iff $y$ is a power of $\omega$ which appears in the Cantor normal form of $x$ (this holds if and only if $Pow(y)$ holds and moreover there exist ordinals  $\beta_1,\beta_2$ such that $x=\beta_1+y+\beta_2$ and $\beta_2<y$);
\item $AddCoef(x,y,z)$ which holds if and only if there exist $i,j,k \leq \omega$ such that $x= \omega^k i$, $y=\omega^k j$ and $z=\omega^k (i+j)$ -- which is equivalent to saying that $z=x+y$ and there exists exactly one ordinal $\alpha$ such that $App(x,\alpha)\wedge App(y,\alpha) \wedge App(z,\alpha)$ holds.

\item $Term(x,y,z)$ which holds iff $z$ is a power of $\omega$, say $z=\omega^k$, $y=\omega^k i$ for some $i<\omega$, and $y$ is the term which corresponds to $\omega^k$ in the Cantor normal form of $x$. The relation $Term(x,y,z)$ holds if and only if $Pow(z)$ holds, $z$ is the only power of $\omega$ which appears in the Cantor normal form of $y$, and there exist ordinals $\beta_1,\beta_2$ such that $x=\beta_1+y+\beta_2$ with $\beta_2<y$, and $z$ do not appear in the Cantor normal forms of $\beta_1$ and $\beta_2$. 
\item For every formula $\psi(x_1,\dots,x_n)$ in the language $\{+,=\}$ one can define the predicate $S_\psi(y_1,\dots,y_n,z)$ which holds if and only if $z$ is a power of $\omega$, say $z=\omega^k$, and if we denote by $a_1, \dots, a_n$ the coefficients of $\omega^k$ in the Cantor normal forms of $y_1,\dots,y_n$, respectively, then $(\omega;+)\models \psi(a_1,\dots,a_n)$. The predicates $S_\psi$ can be defined from $Term$ and $AddCoef$ by induction on the construction of $\psi$.
\item $Cod_i(x,y)$ expresses that $y$ is a power of $\omega$ and the coefficient of $y$ in the Cantor normal form of $x$ equals $i$. This predicate is easily definable from the predicates $Mult_i$ and $Term$.
\end{enumerate}

Finally we can define the formula $\varphi(\alpha_1,\dots,\alpha_n)$ as 

\begin{eqnarray}
\lefteqn{
\exists \gamma \bigg(\gamma < \max(\alpha_1,\dots,\alpha_n) \cdot \omega    \ 
}
\\  \label{eq:l2}
& & \wedge \bigvee_{q_i \in I} Cod_i(\gamma,\omega^0)   \\  \label{eq:l3}
& & \wedge \big( \forall \beta ((Pow(\beta) \wedge \beta \leq \max(\alpha_1,\dots,\alpha_n) \cdot \omega) \longrightarrow  \\ \label{eq:l4} 
& & \ \ \ \  \bigvee_{(q_i,\psi,q_j) \in E} (Cod_i(\gamma,\beta)\wedge  S_{\psi}(\alpha_1,\dots,\alpha_n,\beta) \wedge Cod_j(\gamma,\beta \omega)) \big)  \\ \label{eq:l5}
& & \wedge  \bigvee_{q_i \in T} Cod_i(\gamma, \max(\alpha_1,\dots,\alpha_n) \cdot \omega) \bigg) 
\end{eqnarray}
Line \ref{eq:l2} states that the first state of the sequence of states encoded by $\gamma$ is an initial state; lines \ref{eq:l3} and \ref{eq:l4} that consecutive states in the sequence use transitions of the automaton, and line \ref{eq:l5} that the last state of the sequence is terminal.
\end{proof}

\begin{rems}\hfill
\begin{enumerate}[$\bullet$]
\item
 One can prove that the graph of $x \mapsto \omega x$ is not ${\mathfrak M}$-recognizable, either in a direct way, or using the fact that  by \cite{Choffrut01} the theory of $(\omega^\omega;+,x \mapsto \omega x)$ is undecidable, while the theory of $(\omega^\omega;+,x \mapsto  x \omega)$ is decidable since the function $x \mapsto  x \omega$ is definable in $(\omega^\omega;+)$ which has a decidable theory.
\item We could reformulate the above results by replacing $(\omega^\omega;+)$ by the structure  
$~(\omega;\times,<_P)$, where $x <_P y$ holds if and only if $x<y$ and $x,y$ are prime numbers. In this case we encode every word $u=a_0\dots a_n$ over the alphabet $\omega$ by the integer $c'(u)=2^{a_0+1} 3^{a_1+1}\dots  p_n^{a_n+1}$ where $p_n$ denotes the $n-$th prime number. We refer to \cite{Mau97} for details about the link between $(\omega^\omega;+)$ and
 $(\omega;\times,<_P)$.
\end{enumerate}
\end{rems}
%%%%%%%%%%%%%%%%%%%%%%

%%%%%%%%%%%%%%%%%%%%%%%%

\section{An extension of the Feferman-Vaught formalism}\label{extFV}

The automata and logic that we introduced in the previous sections do not allow comparisons between symbols from different positions. For instance, for every structure ${\mathfrak M}$ whose domain is infinite, the language $\{ss \ | \ s \in |M|\}$ is not $MSO({\mathfrak M})$-definable (see Example \ref{exa:aa}). More generally, given any formula $\varphi(x,y)$ in the language of ${\mathfrak M}$, the language $\{s_1s_2 \ | \ M_\# \models \varphi(s_1,s_2)\}$ is not in general $MSO({\mathfrak M})$-definable.  

A natural way to add expressive power is to extend MSO with predicates such as $P(x,y)$ interpreted as ``$x,y$ are two positions in $w$ such that $w[x]=w[y]$", or more generally predicates interpreted as ``$x,y$ are two positions in $w$ such that ${\mathfrak M}_\# \models \varphi(w[x],w[y])$" (where $\varphi$ is some ${\mathfrak L}_\#$-formula).

However these extensions do not add expressive power when ${\mathfrak M}$ is finite, and lead to undecidable theories in case ${\mathfrak M}$ has an infinite domain (we refer the reader e.g. to  \cite{BDMSS05} where it is shown that much weaker related formalisms have undecidable FO theories).

Thus in order to get decidability results we have to restrict the use of these new predicates. Below we describe a syntactic fragment for which the satisfiability problem still reduces to the decidability of the first-order theory of ${\mathfrak M}$.

Given an ${\mathfrak L}$-structure ${\mathfrak M}=(\Sigma;\dots)$, we associate to every ${\mathfrak L}_\#-$formula $F$ with $m$ free variables some (new) $m-$ary relational symbol $\theta_F$. 

\begin{defi} 
We define $MSO^+({\mathfrak L})$ as MSO over the language $\{<,(\theta_F)_{F \in {\mathcal F}}\}$ where $\mathcal F$ denotes the set of ${\mathfrak L}_\#-$formulas with at least one free variable.
\end{defi}

The interpretation of $MSO^+({\mathfrak L})$ sentences is similar to $MSO({\mathfrak L})$, but for every ${\mathfrak L}_\#$-formula $F$ with $m$ free variables the interpretation of $\theta_F(x_1,\dots,x_m)$ is ``the positions $x_1,\dots,x_m$ in the word $w$ satisfy ${\mathfrak M}_\# \models F(w[x_1],\dots,w[x_m])$". 

\begin{defi}
We say that $X \subseteq \Sigma^*$ is $MSO^+({\mathfrak M})$-definable if there exists an $MSO^+({\mathfrak L})$-sentence $\varphi$ which defines $X$. The definition can be extended easily to the case of subsets $X \subseteq (\Sigma^*)^n$.
\end{defi}

Note that if one allows only $MSO^+({\mathfrak L})$ sentences where the predicates $\theta_F$ are unary, we get nothing but $MSO({\mathfrak L})$.

\begin{exa}\label{exa:extfv1}

Let ${\mathfrak M}=(\omega;+)$. 
\begin{enumerate}[$\bullet$]
\item The language $X \subseteq \omega^*$ of words $u$ over $\omega$ such that some symbol $s \in \Sigma$ appears at least twice in $u$ is $MSO^+({\mathfrak M})$-definable by the $MSO^+({\mathfrak L})$-sentence 
\[ \exists x  \exists y (x<y \wedge  \theta_F(x,y) ) \]
where
\[F(x_1,x_2): x_1 = x_2.\] 

\item The language $X' \subseteq \omega^*$ of words over $\omega$ of the form $u=s_0\dots s_m$ such that there exists $j \in \{0,\dots,m\}$ such that $s_k \geq 2 s_j$ whenever $k > j$, is $MSO^+({\mathfrak M})$-definable by the $MSO^+({\mathfrak L})$-sentence 
\[ \exists x ( \exists x' (x<x') \wedge \forall y (x < y \rightarrow \theta_G(x,y) ) \]
where
\[G(x_1,x_2): \exists z (x_2=x_1+x_1+z)\] denotes the formula which expresses that $x_2 \geq 2 x_1$.
\end{enumerate}
\end{exa}

\begin{exa}\label{exa:extfv2}
Let ${\mathfrak M}=(\Sigma^*;EqLength,\preceq, \{L_a\}_{a\in \Sigma})$ denote the EES structure $\mathcal S$ (see Section 
\ref{automatic}). 
The set of words $w=s_0 \dots s_m$ over the infinite alphabet $\Gamma=\Sigma^*$ such that all even positions carry the same symbol, and all odd positions carry a symbol which is a prefix of $s_0$, is $MSO^+({\mathfrak M})$-definable.  Indeed a convenient  $MSO^+({\mathfrak L})$-sentence is
{\setlength\arraycolsep{3pt}
\begin{eqnarray*}
\lefteqn{
\exists X [EvenPositions(X) \wedge \ 
}
\\
& & \wedge \exists x \in X \ \forall y \in X \ \theta_{F_1}(x,y) \wedge \exists z (\forall t \ \neg t<z \wedge \forall y \not\in X \ \theta_{F_2}(y,z))]
\end{eqnarray*}
}
where $EvenPositions(X)$ is an MSO-formula which expresses that $X$ consists in the set of even positions of $w$, and 
\[F_1(v_1,v_2): \ v_1=v_2 \ ; \] 
\[F_2(v_1,v_2): \ v_1 \preceq v_2.\]
\end{exa}

\bigskip

The formalism $MSO^+({\mathfrak L})$ is in general too expressive with respect to decidability, thus we have to consider a syntactic fragment of it. 

\begin{defi}
We define $MSO^+_R({\mathfrak L})$ as the syntactic fragment of $MSO^+({\mathfrak L})$ consisting in formulas of the form 
\[ \exists x_1 \dots \exists x_n \ \varphi(x_1,\dots,x_n) \]
where $\varphi$ is an $MSO^+({\mathfrak L})$-formula which satisfies the following constraint, which we denote by $(*)$:  {\em all predicates of the form $\theta_F$ in $\varphi$ have the form $\theta_F(x_1,\dots,x_n,y)$, i.e. contain at most one free variable distinct from the $x_i's$.}
\end{defi}

Note that formulas considered in Examples \ref{exa:extfv1} and \ref{exa:extfv2} are $MSO_R^+({\mathfrak L})$-formulas.

\begin{thm}\label{extensionFV}
The emptiness problem for $MSO_{R}^+({\mathfrak M})$-definable languages reduces to the decidability of the FO theory of ${\mathfrak M}$. 
\end{thm}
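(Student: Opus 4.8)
The plan is to use constraint $(*)$ to \emph{decouple} the symbols carried by the word at distinct positions, thereby splitting the emptiness test into a purely automata-theoretic part over a finite alphabet (handled by B\"uchi's theorem) and finitely many first-order queries to $\mathfrak{M}$, answered by the assumed decision procedure for $FO(\mathfrak{M})$. I treat the case $X\subseteq\Sigma^*$; the tuple case is analogous.

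Fix the defining sentence $\exists x_1\cdots\exists x_n\,\varphi(x_1,\dots,x_n)$ in $MSO^+_R(\mathfrak{L})$ and let $F_1,\dots,F_p$ be the $\mathfrak{L}_\#$-formulas occurring in its $\theta$-predicates. By $(*)$ every occurrence is $\theta_{F_j}(x_1,\dots,x_n)$ or $\theta_{F_j}(x_1,\dots,x_n,y)$ for a single extra position variable $y$; let $Y$ index the formulas of the second kind. The first observation is that, once a word $w$ together with witness positions $x_1,\dots,x_n$ (carrying symbols $s_1,\dots,s_n$) is fixed, the truth of $\varphi$ depends on $w$ only through the finite data: (i) the values $b_j=[\,\mathfrak{M}_\#\models F_j(s_1,\dots,s_n)\,]$ for $j\notin Y$, and (ii) the \emph{coloring} $c$ with $c(y)_j=[\,\mathfrak{M}_\#\models F_j(s_1,\dots,s_n,w[y])\,]$ for $j\in Y$. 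Indeed, substituting the constants $b_j$ and replacing each $\theta_{F_j}(x_1,\dots,x_n,y)$ by a fresh unary predicate $\alpha_j(y)$ turns $\varphi$ into an ordinary MSO-formula $\tilde\varphi(x_1,\dots,x_n)$ over $\{<\}\cup\{\alpha_j\}_{j\in Y}$, whose intended interpretation is exactly the coloring $c$.

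The crucial point, and the place where $(*)$ is used, is that the color of a position $y$ is determined by the \emph{single} symbol $w[y]$ relative to the fixed tuple $\bar s=(s_1,\dots,s_n)$, so the symbols at different positions are independent. Hence ``is this coloring realizable by an actual word over $\Sigma$?'' becomes a first-order question about $\mathfrak{M}$: for fixed $\bar s$, a color $d\in\{0,1\}^Y$ is realizable iff $\mathfrak{M}_\#\models\exists t\,(\neg P_\#(t)\wedge\bigwedge_{j\in Y}(F_j(\bar s,t)\leftrightarrow d_j))$, the color at $x_i$ is forced to be $\mathrm{col}_{\bar s}(s_i)$, and $b_j=[\,\mathfrak{M}_\#\models F_j(\bar s)\,]$. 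I would then run the following procedure. Enumerate all tuples $(C,d_1,\dots,d_n,\bar b)$, where $C\subseteq\{0,1\}^Y$ is a set of admissible colors for generic positions, $d_i\in\{0,1\}^Y$ is the color at $x_i$, and $\bar b=(b_j)_{j\notin Y}$; there are finitely many. For each tuple perform two independent checks:
\begin{enumerate}[(a)]
\item \emph{(first-order check)} Decide whether there is $\bar s\in\Sigma^n$ such that every color in $C$ is realizable relative to $\bar s$, $\mathrm{col}_{\bar s}(s_i)=d_i$ for all $i$, and $[\,\mathfrak{M}_\#\models F_j(\bar s)\,]=b_j$ for $j\notin Y$. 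By the formulas above this is a single $FO(\mathfrak{M}_\#)$-sentence (with symbols restricted by $\neg P_\#$), hence decidable once $FO(\mathfrak{M})$ is, since $FO(\mathfrak{M}_\#)$ and $FO(\mathfrak{M})$ reduce to each other.
\item \emph{(automaton check)} Substitute $\bar b$ into $\tilde\varphi$ and, by B\"uchi's theorem \cite{Buchi60}, build a finite automaton over the finite alphabet $\{0,1\}^Y\times\{0,1\}^n$ recognising the pairs (coloring, marking) satisfying $\tilde\varphi(x_1,\dots,x_n)$, the marking track designating the positions interpreting $x_1,\dots,x_n$. Intersect it with the regular constraint ``each marked position $x_i$ carries color $d_i$ and each unmarked position carries a color in $C$'', and test the product for non-emptiness.
\end{enumerate}
The language is non-empty iff some enumerated tuple passes both checks. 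For the easy direction, from witnesses to (a) and (b) one reconstructs a word $w$: put $s_i$ at each $x_i$, and at a generic position whose prescribed color is $c\in C$ put any symbol realizing $c$ (which exists by (a)); then $(w,x_1,\dots,x_n)\models\varphi$. Conversely, a word in the language with its witness positions yields a tuple $(C,\bar d,\bar b)$ (read $C$ off the colors it uses) passing both checks.

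The main obstacle, and exactly where $(*)$ is indispensable, is the decoupling step: if a $\theta_F$ predicate carried two position variables $y,y'$ outside $\bar x$, the color of a position could depend on the pair of symbols at $y$ and $y'$, the choices at different positions would no longer be independent, and realizability of a coloring would cease to be a finite conjunction of first-order queries about $\mathfrak{M}$ — in line with the fact that the full formalism $MSO^+$ is undecidable. A secondary, routine technical point is the B\"uchi encoding of the free first-order variables $x_1,\dots,x_n$ by marked positions, allowing several marks on one position in case some witnesses coincide. Finally, the extension to $X\subseteq(\Sigma^*)^m$ is handled identically: positions then carry $m$-tuples over $\Sigma\cup\{\#\}$, colors are computed from these tuples, and the same decoupling applies because $(*)$ still confines each $\theta_F$ to a single position variable.
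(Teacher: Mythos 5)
Your overall strategy mirrors the paper's: freeze the tuple $\bar s$ of witness symbols, observe that constraint $(*)$ then turns every $\theta_F$ into a unary ``color'' predicate on positions, enumerate the finitely many possible pieces of color data, and split emptiness into a finite-alphabet automaton check plus finitely many $FO(\mathfrak{M})$ queries. (The paper implements the freezing by expanding $\mathfrak{M}$ with constants $c_1,\dots,c_n$ and invoking Proposition \ref{logicauto}; that much is a presentational difference.) However, your procedure has a genuine soundness gap: checks (a) and (b) are performed \emph{independently}, and nothing ties the equality pattern of the marked positions found in (b) to the tuple $\bar s$ found in (a). If two witness variables are interpreted at the same position of $w$, then necessarily $s_i=s_{i'}$; but your data $(C,\bar d,\bar b)$ cannot express this, because having the same color does not imply being the same symbol.

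Concretely, take $n=2$ and $\varphi(x_1,x_2):\ (x_1=x_2)\wedge\theta_F(x_1,x_2)$ with $F(v_1,v_2):\ \neg(v_1=v_2)$. This matrix satisfies $(*)$ (here $Y=\emptyset$), and the defined language is empty, since $x_1=x_2$ forces $w[x_1]=w[x_2]$. Yet the tuple with $\bar b=(1)$ passes check (a) whenever $|\Sigma|\geq 2$ (choose $s_1\neq s_2$), and passes check (b), since after substituting $b=1$ the formula $\tilde\varphi$ reduces to $x_1=x_2$, which your automaton accepts --- you explicitly allow several marks on one position. So your algorithm would wrongly declare the language non-empty. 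The repair is to enumerate in addition the equality type of the witness positions: require in (b) that the marking realize exactly that pattern of coincidences, and add to the $FO(\mathfrak{M}_\#)$ query in (a) the conjuncts $s_i=s_{i'}$ for coinciding witnesses. The paper avoids the issue automatically: the conjunct $\alpha_{F_i}(x_i)$ with $F_i(y):\ y=c_i$ forces the symbol at position $x_i$ to equal the constant $c_i$, so coinciding positions force $c_i=c_{i'}$ inside the single sentence $H_E$ evaluated in $\mathfrak{M}$, and all $\theta$-evaluations use the constants themselves rather than merely their colors.
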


\begin{proof}
Let $\Sigma$ be the domain of $\mathfrak M$. To each $MSO^+_R({\mathfrak L})$-sentence $\psi$ of the form
\[ \exists x_1 \dots \exists x_n \ \varphi(x_1,\dots,x_n) \]
where $\varphi$ satisfies $(*)$ we associate in an effective way an
$MSO({\mathfrak L}')$-formula $\psi'$ where 
${\mathfrak L}'$ is obtained by adding to $\mathfrak L$ new constant symbols $c_1,\dots,c_n$, in order that for every ${\mathfrak L}$-structure ${\mathfrak M}$, the sentence $\psi$ is satisfiable by some word over $\Sigma$ if and only if there exists some ${\mathfrak L}'-$expansion ${\mathfrak M}'$ of ${\mathfrak M}$ such that the the set of words over $\Sigma$ defined by $\psi'$ is not empty.

The transformation proceeds as follows.
First, we can assume that all formulas of the form $\theta_F(x_1,\dots,x_n,y)$ which appear in $\varphi$ are such that $y$ appears freely in $\theta_F$: indeed if $y$ does not appear in $\theta_F$ then $\theta_F(x_1,\dots,x_n)$ is equivalent to $\exists y(y=x_n \wedge \theta_{F'}(x_1,\dots,x_{n-1},y))$ where $F'$ is obtained from $F$ by substituting $y$ for $x_n$. 

We define the $MSO({\mathfrak L}')$-formula $\psi'$ as 
\[ \exists x_1 \dots \exists x_n \ (\bigwedge_{i=1}^n \alpha_{F_i}(x_i) \wedge \varphi'(x_1,\dots,x_n)) \] 
where $F_i(y)$ denotes the formula $y=c_i$ and $\varphi'$ is obtained from $\varphi$ by replacing every formula $\theta_F(x_1,\dots,x_n,y)$ by the formula $\alpha_{F'}(y)$ where $F'$ is obtained from $F$ by replacing every occurence of $x_i$ by the constant symbol $c_i$.

It is easy to check that for every ${\mathfrak L}$-structure ${\mathfrak M}$, $\psi$ is satisfiable by some word model over $\Sigma$ if and only if there exists some ${\mathfrak L}'-$expansion ${\mathfrak M}'$ of ${\mathfrak M}$ such that $L(\psi') \ne \emptyset$.

The formula $\psi'$ involves the predicates $\alpha_{F_1}, \dots, \alpha_{F_n}$, and also predicates of the form $\alpha_F$ which appear in $\varphi'$, say $\alpha_{F_{n+1}}, \dots, \alpha_{F_p}$. By Proposition \ref{logicauto}, given ${\mathfrak M}'$ the question of whether the language defined by $\psi'$ is empty reduces to decide emptiness for the corresponding ${\mathfrak M}'-$automaton. This amounts to compute the set $E_{{\mathfrak M}'}$ of subsets $I\subseteq \{1,\dots,p\}$ such that there exists $a \in \Sigma$ such that (${\mathfrak M}' \models F_i(a)$ if and only if $i \in I$). Thus it suffices to compute all possible sets $E_{{\mathfrak M}'}$ for all ${\mathfrak L}'-$expansions ${\mathfrak M}'$ of ${\mathfrak M}$. This can be done effectively since for every subset $E$ of subsets of $\{1,\dots,p\}$, one can find an ${\mathfrak L}$-sentence $H_E$ such that ${\mathfrak M} \models H_E$ if and only if there exists some ${\mathfrak L}'-$expansion ${\mathfrak M}'$ of ${\mathfrak M}$ such that $E_{{\mathfrak M}'}=E$. Therefore we reduced our initial problem to the question of whether $\mathfrak M$ satisfies some sentence.
\end{proof}

\section{Discussion and conclusion}

The proof of Theorem \ref{extensionFV} makes uses of B\"uchi's decidability result for the WMSO theory of $(\omega;<)$. However the arguments are sufficiently general to apply to any decidable extension of WMSO. An interesting example is the WMSO theory $T_{card}$ of $\omega$, without $<$, but with the predicate $X \sim Y$ interpreted as ``$X$ and $Y$ have the same cardinality". This theory was proven to be decidable by Feferman and Vaught in \cite{FV59} by reduction to Presburger Arithmetic (by elimination of quantifiers, and without using the composition technique). For recent applications of this decidability result we refer the reader to the  papers \cite{KNR05,Rev05,Lugiez05}.

One can show that Theorem \ref{extensionFV} holds with $T_{card}$, which provides a class of theories which are both decidable and quite expressive. As an example, if we set ${\mathfrak M}=(\Sigma^*;EqLength,\preceq, \{L_a\}_{a\in \Sigma})$ (the EES structure, whose FO theory is decidable \cite{EES69}), then the corresponding syntactic fragment allows to express properties related to finite words $w$ over the alphabet $\Sigma'=\Sigma^*$ (that is, finite sequences of words over $\Sigma$) such as ``there exist two distinct symbols $s,s'$ appearing in $w$ such that at least one third of the symbols in $w$ are prefix of $s$, or have the same length as $s'$". Another interesting example is the case ${\mathfrak M}=(\omega;+)$. In this case we obtain a decidable fragment for words over the alphabet $\omega$, i.e. lists of natural numbers. This fragment might be an interesting formalism for the verification of programs which manipulate pointers and linked data structures.

By Proposition \ref{logicauto}, ${\mathfrak M}$-automata capture the logic $MSO({\mathfrak L})$. Thus a natural issue is to get an automata counterpart for the logic $MSO^+_R({\mathfrak L})$. An idea is to consider ${\mathfrak M}$-automata equipped with a finite number of ``write once" registers. In addition to the usual transitions of ${\mathfrak M}$-automata, these automata are allowed to write the current symbol in some empty register, and test whether the symbols currently stored in the registers and the current symbol satisfy some ${\mathfrak L}_\#-$sentence in ${\mathfrak M}_\#$. Once a symbol is stored in some register, the automaton cannot store any other symbol in this register. In order to capture the fragment $MSO^+_R({\mathfrak L})$, it seems that one should also allow non-deterministic $\epsilon-$transitions where the automaton chooses to store some symbol from the input alphabet in some (empty) register.

Another interesting issue would be to find (more natural) extensions of the Feferman-Vaught formalism  in the spirit of Theorem \ref{extensionFV}. The formalism $MSO^+({\mathfrak L})$ allows the use of predicates $\theta_F$ for all ${\mathfrak L}_\#-$formulas $F$, which makes necessary to consider the fragment $MSO^+_R({\mathfrak L})$ in order to get decidability results. It would be interesting to find other fragments of $MSO^+({\mathfrak L})$ obtained by imposing conditions on the ${\mathfrak L}_\#-$formulas $F$. One can consider e.g the case where we allow only the use of formulas $F$ which define equivalence relations in ${\mathfrak M}$. Note that similar results are already proven in the papers \cite{BDMSS05,Bouyer2002}.

Finally, it seems that all results in this paper can be extended rather easily to the case of infinite words as well as (in)finite binary trees, by relying on classical decidability results for MSO theories. 

\section*{Acknowledgements}

I wish to thank Wolfgang Thomas for his careful reading of a preliminary version of the paper, and for many important suggestions and corrections. I also thank Lev Beklemishev, Christian Choffrut, Serge Grigorieff, Anca Muscholl, Peter Revesz and Luc S\'egoufin for interesting discussions and comments. Finally I am grateful to the two referees for helpful suggestions to improve the quality of the paper.

\bibliographystyle{plain}

\bibliography{fv}

\end{document}